\documentclass[%
 reprint,
 amsmath,amssymb,
 aps,
]{revtex4-2}

\usepackage{amsmath}
\usepackage{amssymb}
\usepackage{bm}
\usepackage{graphicx}
\usepackage[colorlinks=true, allcolors=blue]{hyperref}
\usepackage{amsthm}
\usepackage{float}
\usepackage{mathtools}
\usepackage{multirow}
\usepackage{booktabs}
\usepackage{algorithm}
\usepackage{algpseudocode}
\usepackage{appendix}
\usepackage{xcolor}

\newtheorem{theorem}{Theorem}[section]
\newtheorem{lemma}[theorem]{Lemma}
\newtheorem{proposition}[theorem]{Proposition}

\newtheorem{example}{Example}[section]
\newtheorem{definition}[theorem]{Definition}

\newcommand{\grobner}{Gröbner }

\newcommand{\hl}[1]{\textcolor{red}{#1}}

\begin{document}
\title{Materialised symmetries of 2D translationally invariant codes}
\author{Andrew Li}
\author{Stephen D. Bartlett}
\author{Andrew C. Doherty}
\author{Campbell K. McLauchlan}
\affiliation{School of Physics, University of Sydney, Sydney, NSW 2006, Australia}
\date{October 2026}

\begin{abstract}
    There has been significant recent interest in near-term qLDPC codes as high-performance alternatives to surface and color codes. One such class of codes is 2-dimensional translationally invariant (TI) codes, such as bivariate bicycle codes, which share similar properties to topological codes. Fundamental objects in the study of such codes are the materialised symmetries, which can be used for the construction of matching-based decoders. These decoders generalise the minimum-weight perfect matching decoder for toric codes and provide similar asymptotic performance guarantees that heuristic decoders such as BP and its variants lack. Despite this, the mathematical structure of symmetries along with their properties under translation and restriction to finite-sized lattices has not been well-studied.
    We describe a decomposition of symmetry spaces of 2D CSS TI codes on infinite lattices into translation-invariant subspaces which allow us to write an explicit basis of symmetries in a plane-wave-like form.
    We then describe how to adapt this infinite lattice basis so that any corresponding rectangular periodic lattice basis can be obtained by restriction to basis elements compatible with the periodic boundaries. This allows us to describe how the symmetry space varies with different rectangular dimensions. We illustrate via examples that it is often straightforward in practice to determine a basis of symmetries on twisted boundaries as well, without needing to recompute the decomposition.
    We comment on the applications of this framework to matching-based decoders and provide examples of symmetries of such codes as the gross code.
\end{abstract}

\maketitle

\section{Introduction}

Quantum error correction has provided a route to suppressing noise in quantum computers beyond the limits of the hardware~\cite{PhysRevA.52.R2493,Knill_2000, Calderbank_1996,Steane_1996}. Topological codes, particularly those in 2D, have proven a popular choice of error correcting code due to their modest physical demands, especially their sparse connectivity~\cite{Kitaev_2003,Dennis_2002}. Such codes are understood to have tradeoffs between locality and performance~\cite{Bravyi_2010,dai2024localityvsquantumcodes}, and thus recent interest has shifted from traditional nearest-neighbour-connected 2D codes, such as the surface and color codes~\cite{bravyi1998quantumcodeslatticeboundary,Bombin_2006}, to longer-range codes in the form of bivariate bicycle (BB) codes and tile codes~\cite{Bravyi_2024,PhysRevA.88.012311,Panteleev_2021,steffan2025tilecodeshighefficiencyquantum,liang2025planarquantumlowdensityparitycheck,wang2026algebrabivariatebicyclesurfacecodes}. These families offer codes implementable in near-term quantum computers with improved parameters and provide an intermediary between traditional 2D codes and asymptotically-good qLDPC codes. Although not local in a nearest-neighbour sense, these codes can be thought of as 2D topological when placed in a family of growing lattices~\cite{Haah_2013,Liang_2025,Chen_2025} and are in addition translationally-invariant (TI). Leveraging topological and TI algebraic machinery to study these codes has led to fault-tolerant gate constructions~\cite{eberhardt2024logicaloperatorsfoldtransversalgates,liang2026selfdualbivariatebicyclecodes,cross2025improvedqldpcsurgerylogical}, to architecture designs for platforms such as neutral atoms~\cite{Viszlai_2026,Wang_2026} and superconducting qubits~\cite{yoder2025tourgrossmodularquantum,Shaw_2025,nixon2026vinecodeslowoverheadquantum,gu2026nearestneighbourgatesneedhighrate}, and to decoders~\cite{sahay2026matchingdecoderbivariatebicycle,tan2026generalizedmatchingdecoders2d}.

Studies of the structure of topological codes has motivated interest in \emph{materialised symmetries}~\cite{Kitaev_2003,Brown2020a} and the closely related concept of \emph{metachecks}~\cite{Campbell_2019}, where groups of stabilizer generators multiply to give the identity. Materialized symmetries lead to matchable decoding subproblems, whose solutions are combined to form a decoding~\cite{Brown2020a,Delfosse_2014,Brown_2022,Sahay_2022,Kubica_2023}. These decoders are of interest as they have provable thresholds and thus can have arbitrarily high error suppression on sufficiently large lattices \cite{Raveendran_2021,tan2026generalizedmatchingdecoders2d}. This is in contrast to heuristic decoders such as belief propagation (BP) which may hit error floors at low physical error rates. Materialized symmetries are also relevant in decoupling 2D topological phases \cite{Kubica_2015,wang2026decoupling2dtranslationinvarianttopological} and designing planar boundaries \cite{liang2026operatoralgebraalgorithmicconstruction,liang2025planarquantumlowdensityparitycheck}. 

Motivated by the rich translational structure of materialized symmetries in quantum error correcting codes of interest, in this paper we explore this structure through the lens of Haah's formalism for describing 2D CSS TI codes \cite{Haah_2013} on an infinite lattice. In this formalism, the checks of the code are described by a group algebra over the lattice, and translation corresponds to multiplication. We first highlight some key translational properties of symmetries, particularly their components closed under translation, with a motivating example in Section~\ref{sec-motivating} and relate it to the concept of anyon-symmetry duality. We cover necessary background in Section~\ref{sec-background}, in particular formally describing anyon-symmetry duality, and then in Section~\ref{section-infinite} describe a decomposition of the space of symmetries into subspaces closed under translation by leveraging anyon-symmetry duality, and obtain an explicit basis for each subspace. Each subspace corresponds to coordinates in some extension field of $\mathbb{F}_2$, and these coordinates describe the horizontal/vertical frequencies of their corresponding symmetries, which in the simplest cases resemble plane waves. In Section~\ref{section-finite}, we describe how to use infinite-lattice symmetries to compute a basis of symmetries after compactifying to any periodic rectangular-lattice configuration. We find that after appropriately adapting the infinite-lattice basis for each subspace, we can efficiently determine finite-lattice symmetries and explicitly describe how the symmetry space varies with boundary dimensions. 
We also discuss how to use these methods to analyse the case of twisted boundary conditions for less complex decompositions.

\section{Motivating example: 2D Color Code}\label{sec-motivating}

Let us first consider the motivating example of the 2D hexagonal color code, which we will set on an infinite lattice. We pick this example for the relative simplicity of its materialized symmetries, which nevertheless display interesting patterns on the 2D lattice that are not fully described by standard approaches such as Gaussian elimination. These materialized symmetries and related metachecks are well-known for their application in matching-based decoding~\cite{Delfosse_2014,Sahay_2022,Kubica_2023}. We will also informally introduce some preliminaries relevant to the example such as anyon-symmetry duality, starting with the anyons of the color code and seeing how we can construct metachecks from these.

\begin{figure*}[htbp]\
    \centering
    \includegraphics[width=0.32\textwidth]{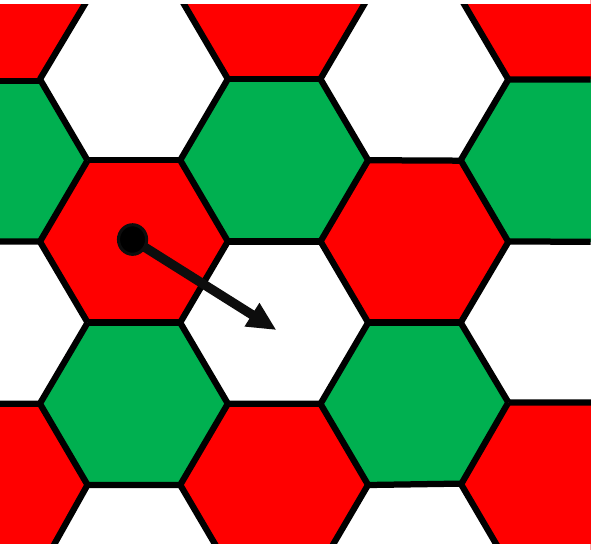}
    \hfill
    \includegraphics[width=0.32\textwidth]{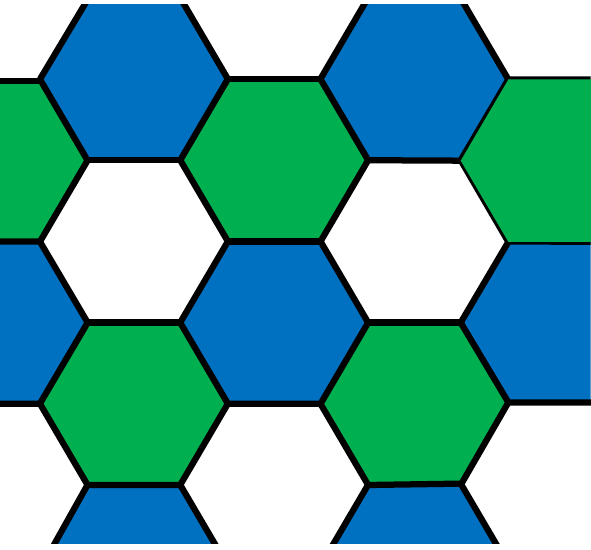}
    \hfill
    \includegraphics[width=0.32\textwidth]{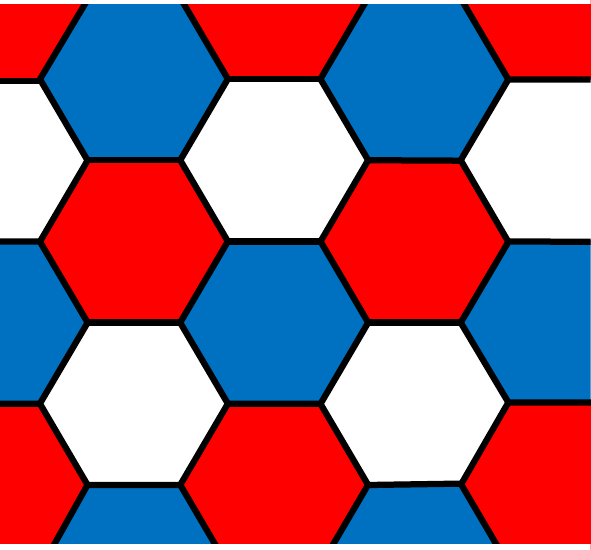}
    \caption{The \textit{rg}, \textit{gb}, and \textit{br} X or Z symmetries of the hexagonal color code. The arrow is a translation that cycles between symmetries.}
    \label{fig:cc_symmetries}
\end{figure*}

For our purposes, anyons are a classification of syndromes such that two syndromes describe the same anyon if we can deform one into the other via some finite-weight Pauli operator. In the color code, each violated $\mathbf{c}$-colored X check (with $\mathbf{c}\in \{\mathbf{r},\mathbf{g},\mathbf{b}\}$) hosts an anyon, named $\mathbf{cz}$. Analogously, violated Z checks host $\mathbf{cx}$ anyons. We can also fuse anyons; two anyons $a$ and $b$ fuse to give $c$ if some application of finite-weight Pauli operators converts the former pair into the latter. For example, a single Pauli X error converts a pair of adjacent $\mathbf{rx}$ and $\mathbf{gx}$ anyons into a $\mathbf{bx}$ anyon; indeed $\mathbf{c_1x} \times \mathbf{c_2x} = \mathbf{c_3x}$ for distinct $\mathbf{c_1},\mathbf{c_2},\mathbf{c_3}$ (and where $\times$ represents fusion). A materialised symmetry is an assignment of binary values to anyons that respects the fusion structure. For example, if we assign $\mathbf{rx} \to 1$ and $\mathbf{gx} \to 1$, then we must assign $1+1=0$ to $\mathbf{bx} = \mathbf{rx} \times \mathbf{gx}$. Symmetries describe conservation rules; the total parity of all anyons assigned 1 by a symmetry, within a valid syndrome resulting from finite-weight errors is even. The above example, which we call the \textit{rg} Z symmetry (as the violated checks have Pauli label Z), says that the total number of $\mathbf{rx}$ and $\mathbf{gx}$ anyons is always 0 (mod 2). We analogously have \textit{gb} and $\textit{br}$ Z symmetries. We can add symmetries by adding the corresponding assignments pointwise (mod 2); for example the \textit{rg} and \textit{gb} Z symmetries add to give to the \textit{br} Z symmetry. 

We can now use these symmetries to construct metachecks, patterns of stabiliser checks that multiply to the identity; these tell us about redundancies between stabiliser checks. In this work, we use the term (materialised) symmetry to describe the objects dual to anyons, and metachecks to refer to subsets of checks that multiply to the identity. Due to the \textit{rg} Z symmetry, for example, we know that any error must create $\mathbf{rx}$ and $\mathbf{gx}$ anyons in pairs. Consider a pattern consisting of all stabilisers whose violations are $\mathbf{rx}$/$\mathbf{gx}$ anyons, i.e. the red and green Z checks. Any X error will commute with this pattern due to its corresponding conservation rule, and thus the pattern must be a metacheck. Indeed, every symmetry gives a unique metacheck and vice versa, so we regard the two as equivalent. We show the metachecks constructed from the \textit{rg}, \textit{gb}, and \textit{br} Z symmetries in Fig.~\ref{fig:cc_symmetries}.

A standard way to construct metachecks is as follows: viewing the group of anyons as a vector space with addition corresponding to fusion, we compute a basis of anyons (e.g. $\{\mathbf{rx},\mathbf{gx}\}$). We can then compute the dual basis of the symmetries: for each basis anyon $a$, define the dual basis element to assign $a$ to 1 and all other basis anyons to 0. We then acquire a basis of metachecks by including exactly the checks whose violations correspond to anyons assigned 1 (i.e., red and green $Z$ plaquettes for symmetries dual to $\mathbf{rx}$ and $\mathbf{gx}$ anyons).

What this method does not generally provide, however, is insight into the translational structure of these metachecks. For instance, we can observe that there exists a translation which cycles between the three metachecks of each Pauli label (Fig.~\ref{fig:cc_symmetries}). This suggests a translational periodicity of 3 and thus tells us that when introducing periodic boundary conditions, we need these boundary lengths to be multiples of 3 in order to support these metachecks. As the dimension of the space of metachecks is proportional to the logical dimension; it is very much necessary that the color code is placed on such suitable boundary conditions. We would thus like to explore the symmetries of 2D codes, not just considering the pointwise addition but also the translational data via a multiplication operation. We will decompose the symmetry space into `components'; each component is a subspace of symmetries closed under translation such that the entire symmetry space is a direct sum of these components. Loosely speaking, we can think of these components as the frequencies/periodicities that make up an arbitrary symmetry/metacheck. In the color code, there is only one component of X (or Z) symmetries.

\section{Background}\label{sec-background}

In this section, we briefly describe the chain complex formulation of CSS codes, and the additional module structure which the CSS chain complex is equipped with when considering TI codes on a square lattice. We then introduce 2D TI codes in this framework, along with key concepts such as anyons and metachecks/symmetries, which we show to be dual to each other.

\subsection{CSS codes as chain complexes}

A chain complex is a sequence $\{C_n\}$ of objects and maps $\partial_n : C_n \to C_{n-1}$ called differentials such that $\partial_n \partial_{n+1} = 0$. Typically, each $C_n$ will be a $\mathbb{F}_2$-vector space and the differentials will be $\mathbb{F}_2$-linear maps. Of interest are length-2 complexes \cite{bravyi2013homologicalproductcodes}:
\begin{align}
    C_2 \xrightarrow[]{\partial_2} C_1 \xrightarrow[]{\partial_1} C_0
\end{align}
which describe CSS codes~\cite{Calderbank_1996,Steane_1996} after selecting bases of $C_2$, $C_1$, and $C_0$ to represent the X checks, qubits, and Z checks respectively. Here, $\partial_2$ takes an X stabiliser to its support in qubits and is represented by the transposed X parity check matrix. Also, $\partial_1$ takes the support of an X operator to its Z syndrome, and is represented by the Z parity check matrix. 
In $C_1$, the X logical operators have no Z syndrome and are thus given as $\ker \partial_1$. When quotiented by the X stabilisers (as their qubit supports) $\mathrm{im}\, \partial_2$ we get the homology $H_1$ at $C_1$ representing the X logical classes.
We can also construct the dual complex:
\begin{align}
    C^0 \xrightarrow[]{d^0} C^1 \xrightarrow[]{d^1} C^2
\end{align}
where $d^0 = \partial_1^T$ and $d^1 = \partial_2^T$. This is the Z/X flipped counterpart of the original CSS chain complex and takes Z stabilisers to qubits, and qubits to their X syndromes.

\subsection{2D translationally invariant codes}

We briefly review CSS codes on an infinite 2D square lattice, using Haah's Laurent polynomial formalism \cite{Haah_2013}. By replacing periodic boundaries with the infinite lattice, we can explore properties that are shared between families of codes across different lattices, particularly anyons and symmetries.

The sites of this lattice are labelled by the group $\mathbb{Z}^2$; we write the site at co-ordinates $(i,j)$ as $x^i y^j$; as a group action, these elements describe translations. If we use the elements of $\mathbb{Z}^2$ to label our checks or qubits, then they form a basis spanning the Laurent polynomial ring $R:=\mathbb{F}_2[x^{\pm1},y^{\pm1}]$. This ring allows us to describe any finite set of co-ordinates in $\mathbb{Z}^2$.
We will consider a setup with two qubits on each site, such that $C_1 = R^2$. Although one can construct examples with more qubits per site, the vast majority of 2D TI codes of interest (e.g.,\ BB codes) fall under this construction. We will demand that all translations of a stabiliser are themselves stabilisers, a condition called translational-invariance. 
Suppose there is one type of X check, and all X checks are translations of this one type. We may choose to label one such check as $1$, and then its $(i,j)$-translations as $x^i y^j$. Under this labelling $C_2 = R$, and $\partial_2$ is not only $\mathbb{F}_2$-linear but $R$-linear, that is $\partial_2(x^i y^j s) = x^i y^j \partial_2(s)$. Similarly, setting one type of Z check we get that $C_0 = R$ and $\partial_1$ is $R$-linear. We then have that the CSS chain complex is a complex of $R$-modules.

We will additionally demand that the CSS complex is exact at $C_1$, which is to say that $H_1 = 0$ (this will ensure that we have a growing distance when compactifying to finite lattices). This then forces the complex to take the following form:
\begin{align}
    R \xrightarrow[]{\left[\begin{smallmatrix}
        A \\ B
    \end{smallmatrix}\right]}
    R^2 \xrightarrow[]{\left[\begin{smallmatrix}
        B & A
    \end{smallmatrix}\right]}
    R 
\end{align}
where $A, B \in R$ are Laurent polynomials satisfying $\mathrm{gcd}(A,B) = 1$~\cite{Liang_2025,wang2026algebrabivariatebicyclesurfacecodes}. Swapping X and Z, we get:
\begin{align}
    R \xrightarrow[]{\left[\begin{smallmatrix}
        \bar{B} \\ \bar{A}
    \end{smallmatrix}\right]}
    R^2 \xrightarrow[]{\left[\begin{smallmatrix}
        \bar{A} & \bar{B}
    \end{smallmatrix}\right]}
    R 
\end{align}
where $\overline{x^i y^j} = x^{-i} y^{-j}$.
We can `compactify' this code to periodic boundaries by replacing $R$ with $R/\mathfrak{b}$ where $\mathfrak{b} = (x^\ell - 1, y^m - 1)$ describes a rectangular torus with dimensions $\ell \times m$. More generally, we might have $\mathfrak{b} = (x^\ell - 1, x^my^n - 1)$ describing a twisted torus occupying a parallelogram with corners $(0,0)$, $(\ell,0)$,$(m,n)$, and $(m+\ell,n)$.

\subsection{Anyons and symmetries}

We will now review the dual notions of anyons and symmetries in the context of Haah's formalism. 
In our infinite lattice setting, an anyon refers to any syndrome pattern that can be produced by an infinite-support Pauli operator~\cite{Haah_2013}; for all codes we consider, this is simply all syndrome patterns. Let us start by considering Z check syndrome patterns, which we call $m$-anyons. We can classify these $m$-anyons into `sectors'; two $m$-anyons share a sector if we can turn one into the other by some finite-support X operator. We can obtain the space of $m$-anyon sectors by quotienting out the space of Z syndromes $C_0 = R$ by all possible syndromes created by a finite-support X operator, $\mathrm{im}\, \partial_1  = (A,B)$ to get $R/(A,B)$~\cite{Haah_2013, Liang_2025}. Analogously, X syndrome patterns or $e$-anyons are classified by $R/(\overline{A},\overline{B})$.

A metacheck is a pattern of stabiliser checks which multiply to the identity. We call such a pattern of Z checks an $m$-metacheck and its X counterpart an $e$-metacheck.
We can describe the space of finite-support or local $m$-metachecks with $\ker d^0$ (in our construction, $d^0$ is injective so there are no local metachecks), however as $R$ only supports finite collections of lattice sites, we need to extend it to consider infinite-support or global metachecks. To this end, we will need to describe the dual $M^*$ of an $R$-module $M$.

\begin{definition}
    Given an $R$-module $M$ (with $R$ a Laurent-polynomial ring), the dual of $M$, denoted $M^*$, is the $\mathbb{F}_2$-vector space dual of $M$ equipped with the $R$-multiplication operation $(r \phi) (m) = \phi (\bar{r} m)$ for $r \in R$, $\phi \in M^*$, and $m \in M$.
\end{definition}

In particular, identifying $x$ and $y$ with their $\mathbb{F}_2$-vector space duals (the dual of $x$ being the function taking $x$ to 1 and all other monomials to 0), we can think of $R^*$ as containing $R$; while $R$ only permits finite sums of monomials, $R^*$ allows any infinite series of monomials as we can map as many monomials to $1$ as we like.
By replacing $R$ with $R^*$ in the CSS chain complex, we can then write the space of global $m$-metachecks as $\ker \widetilde{d^0}$ where $\widetilde{d^0} = \left[\begin{smallmatrix}
        \bar{B} \\ \bar{A}
    \end{smallmatrix}\right]$
is represented by the same matrix as $d^0$ but acts on $R^*$ instead of $R$. Analogously, we can describe the space of global $e$-metachecks as $\ker \tilde{\partial_2}$.
We now state a result connecting metachecks to anyons:

\begin{proposition}[Anyon/metacheck duality]\label{prop-am-duality}
    The $e$/$m$-anyons are dual to the $e$/$m$-metachecks, that is:
    \begin{align*}
        \ker \widetilde{d^0} &\cong (R/(A,B))^*
        \\
        \ker \widetilde{\partial_2} &\cong (R/(\overline{A},\overline{B}))^*
    \end{align*}
\end{proposition}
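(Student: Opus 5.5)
The proof rests on the universal property of the quotient and on the adjoint relationship between a map on $R$ and the same matrix acting on $R^*$. I carry it out for $\ker\widetilde{d^0}$; the $e$ case is identical with $A,B$ replaced by $\overline{A},\overline{B}$ and $d^0$ replaced by $\partial_2$.

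First, an $\mathbb{F}_2$-linear functional on $R/(A,B)$ is the same thing as a functional $\phi\in R^*$ that vanishes on the ideal $(A,B)=\mathrm{im}\,\partial_1$. Equivalently,
\[
(R/(A,B))^*\cong \{\phi\in R^* : \phi\circ\partial_1=0\}.
\]
This identification is compatible with the $R$-action in the Definition. Since the ideal is closed under multiplication by $\bar r$, the condition $\phi(\bar r\,\cdot)\equiv 0$ on $(A,B)$ is preserved, so the isomorphism is one of $R$-modules, not merely of vector spaces.

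Second, I identify the condition $\phi\circ\partial_1=0$ with $\widetilde{d^0}\phi=0$. The map $\phi\mapsto\phi\circ\partial_1$ is the linear dual $\partial_1^*:R^*\to(R^2)^*=(R^*)^2$. Because $\partial_1(u,v)=Bu+Av$, for $\phi\in R^*$ and $(u,v)\in R^2$ we have
\[
\phi(Bu+Av)=\phi(Bu)+\phi(Av)=(\bar B\phi)(u)+(\bar A\phi)(v),
\]
using $(r\phi)(m)=\phi(\bar r m)$ with $r=\bar B$, since $\bar{\bar B}=B$. Hence $\partial_1^*\phi=(\bar B\phi,\bar A\phi)$, which is precisely the action of the matrix $\left[\begin{smallmatrix}\bar B\\ \bar A\end{smallmatrix}\right]$ on $R^*$, that is, $\widetilde{d^0}$. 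It follows that
\[
\ker\widetilde{d^0}=\ker\partial_1^*=\mathrm{Ann}(\mathrm{im}\,\partial_1)\cong(R/(A,B))^*.
\]
For the $e$ case, $\partial_2=\left[\begin{smallmatrix}A\\B\end{smallmatrix}\right]$ is the adjoint of $d^1=[\bar A\ \bar B]$. The same computation gives $\ker\widetilde{\partial_2}=\mathrm{Ann}((\overline{A},\overline{B}))\cong(R/(\overline{A},\overline{B}))^*$.

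The main obstacle is making precise how $R$ and its elements act on $R^*$, which means checking that the embedding $R\subset R^*$ via monomial duals is consistent with this action. Concretely, the functional dual to $x^iy^j$ multiplied by $r$ must be the functional dual to $r x^iy^j$; this holds because $\phi_{x^iy^j}(\bar r m)=1$ exactly when $m$ matches the monomials of $r x^iy^j$, one coefficient at a time. Once this is confirmed, $\widetilde{d^0}$ is unambiguously the adjoint of $\partial_1$ and the remaining steps are formal.
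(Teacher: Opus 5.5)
Your proposal is correct and takes essentially the same route as the paper: the paper defines $\varphi(f)([p]) = f(p)$ on $\ker\widetilde{d^0}$, which likewise identifies metachecks with functionals on $R$ that vanish on $(A,B)$. Your adjoint identity $\phi(Bu+Av) = (\bar B\phi)(u) + (\bar A\phi)(v)$ is exactly the fact the paper's well-definedness step relies on (stated there in a garbled form). It also gives directly the converse that the paper leaves as ``readily checked'', namely that every functional on the quotient lies in $\ker\widetilde{d^0}$.
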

\begin{proof}
    See Appendix~\ref{proof1}
\end{proof}

As discussed in the motivating example, this picture allows us to think of metachecks as anyon conservation rules, a context in which they are often referred to alternatively as symmetries. We can also think of metachecks as stabiliser patterns following a consistency rule enforced by the anyon structure; if a metacheck has odd overlap with some syndrome pattern, then it must have odd overlap with all syndrome patterns belonging to the same anyon sector and vice versa. As the anyon structure informs the metacheck structure, once we have computed the former the latter follows; a straightforward way to obtain a basis of metachecks is to first compute a basis of anyons and then take the dual. The anyon basis, as demonstrated in recent works~\cite{Liang_2025}, can be readily computed by taking a \grobner basis of $(A,B)$, although such a basis does not generally provide insight into the translational structure of metachecks. We therefore take a different approach, outlined below.

\section{Symmetries on the infinite lattice}\label{section-infinite}

\subsection{Overview of the procedure}

We will in this section consider symmetries on the infinite lattice. Although these symmetries have infinite support, we will see that upon decomposing the symmetry space into components closed under translation, each component will become `local' under a change of basis and thus admit a compact description. 

First, we solve the system of equations $A=B=0$ as its solutions label the components of the decomposition. The solutions may fall in field extensions of $\mathbb{F}_2$, but there is some sufficiently large finite field $\mathbb{K}$ that contains all such solutions. Upon extending to this finite field to get what we informally call $\mathbb{K}$-anyons (as opposed to the original $\mathbb{F}_2$-anyons), we can decompose our anyon space into a direct sum of components, each of which is labelled by some point $(x=\alpha, y=\beta)$, that is a common root of $A$ and $B$. 
These points organise into orbits of Galois conjugates, analogous to complex conjugate pairs when solving systems of real polynomials. We then take the dual of each component, to get a similar decomposition of the $\mathbb{K}$-symmetry space. We then apply a map called the field trace to return to $\mathbb{F}_2$-symmetries, upon which components within an orbit collapse to a single component. We can thus also associate each component on the $\mathbb{F}_2$-symmetry space with an orbit; the component will describe a subspace closed under translation that is irreducible in the sense that it cannot be further decomposed. The points composing the associated orbit as well as the local structure of the component at these points will allow us to determine properties of the subspace such a translational periodicity.

\subsection{Decomposing the symmetry space}

We will focus on $m$-symmetries, which we will for convenience simply refer to as symmetries or metachecks. The procedure for computing $e$-symmetries follows analogously.

As we have seen, the $m$-symmetries are associated with elements of the ring $R/(A,B)$. We will from here onwards define:
\begin{align}
    I = (A,B) \subseteq R
\end{align}
so that the symmetries are $R/I$.
In this ring, polynomials $A$ and $B$ are identified with zero. Consider, therefore, the system of equations $A(x,y) = B(x,y) = 0$. There exists some finite field $\mathbb{K} \supseteq \mathbb{F}_2$ in which all solutions lie.
As we are working in the Laurent polynomial ring where $x$ and $y$ are invertible, we ignore solutions where $x=0$ or $y=0$. Let us denote the set of remaining solutions by $\mathcal{Z}$. Under our assumption that the CSS complex is exact at $C_1$, there will be finitely many such solutions.
After extending $R$ by $\mathbb{K}$ to get $R' = \mathbb{K}[x,y,x^{-1},y^{-1}] \supseteq R$, our anyon space extends to $R'/J$, where $J=IR' := \lbrace xr \, | \, x\in I, r\in R'\rbrace$ is the extension ideal of $I$.
We informally call this space the $\mathbb{K}$-anyons. The ideal $J$ admits a primary decomposition:
\begin{align}
    J = \bigcap_{(\alpha,\beta) \in \mathcal{Z}}\;J_{(\alpha,\beta)}
\end{align}
with a unique ideal $J_{(\alpha,\beta)}$ associated to each point of $\mathcal{Z}$ such that $\sqrt{J_{(\alpha,\beta)}} = (x-\alpha,y-\beta)$ (we define $\sqrt{Q} := \{r \in R \, | \, r^n \in Q \;\mathrm{for} \;\mathrm{some}\; n \in \mathbb{Z}_{\geq 0}\}$). This leads to the following decomposition of the $\mathbb{K}$-anyon space:

\begin{figure}
    \centering
    \includegraphics[width=1.0\linewidth]{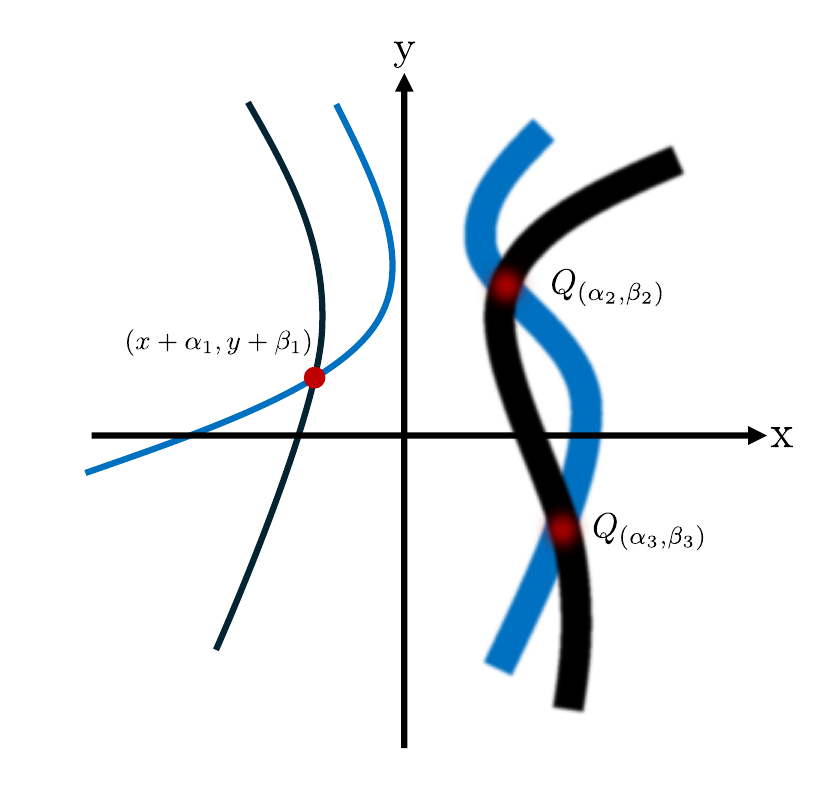}
    \caption{Schematic illustration of the decomposition of $R'/J$. Each component corresponds to polynomial functions at a point $(\alpha_i,\beta_i)$, but fat points ($i=2,3$) include additional structure in their neighbourhood. The points occur at the intersection of $A=0$ (black) and $B=0$ (blue).
    }
    \label{fig:primary-decomp}
\end{figure}

\begin{proposition}[Chinese remainder theorem decomposition \cite{Eisenbud1995}]\label{crt-decomp}
\begin{align}
    R'/J \cong \bigoplus_{(\alpha,\beta) \in \mathcal{Z}} \,R'/J_{(\alpha,\beta)}
\end{align}
\end{proposition}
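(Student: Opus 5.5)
The plan is to apply the Chinese remainder theorem for commutative rings. It says that if ideals $J_1,\dots,J_n$ of a commutative ring $S$ are pairwise comaximal, meaning $J_i + J_k = S$ for $i \neq k$, then the natural map $S/\bigcap_i J_i \to \prod_i S/J_i$ is an isomorphism. For finitely many factors the product is the direct sum. So with $S = R'$ and the primary decomposition $J = \bigcap_{(\alpha,\beta)\in\mathcal{Z}} J_{(\alpha,\beta)}$ already in hand, the only thing left is to check comaximality of the primary components. We also need $\mathcal{Z}$ to be finite, which the paper notes follows from exactness at $C_1$. The map itself is $r + J \mapsto (r + J_{(\alpha,\beta)})_{(\alpha,\beta)}$. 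It is well defined because $J \subseteq J_{(\alpha,\beta)}$, and it is injective because its kernel is exactly $\bigcap J_{(\alpha,\beta)} = J$.

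The first step is comaximality at the level of radicals. For distinct points $(\alpha,\beta) \neq (\alpha',\beta')$ in $\mathcal{Z}$, the maximal ideals $\mathfrak{m} = (x-\alpha,y-\beta)$ and $\mathfrak{m}' = (x-\alpha',y-\beta')$ of $R'$ are distinct. They are genuine maximal ideals of the Laurent ring because $\alpha,\beta \neq 0$, which is why those solutions were excluded. Distinct maximal ideals satisfy $\mathfrak{m} + \mathfrak{m}' = R'$. This can also be seen directly: if, say, $\alpha \neq \alpha'$, then $(x-\alpha') - (x-\alpha) = \alpha - \alpha'$ is a unit of $\mathbb{K}$.

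The second step, which is the main obstacle though a standard one, lifts comaximality from the radicals to the primary ideals themselves. I would use the fact that if $\sqrt{P} + \sqrt{Q} = R'$, then $P + Q = R'$. Write $1 = u + v$ with $u \in \sqrt{J_{(\alpha,\beta)}}$ and $v \in \sqrt{J_{(\alpha',\beta')}}$, and pick $N$ with $u^N \in J_{(\alpha,\beta)}$ and $v^N \in J_{(\alpha',\beta')}$. Expanding $1 = (u+v)^{2N}$ binomially, each term contains either $u^N$ or $v^N$ as a factor, so $1 \in J_{(\alpha,\beta)} + J_{(\alpha',\beta')}$. The radical step requires that $R'$ be Noetherian, so that a finite power of the radical lies in the ideal. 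This holds because $R'$ is a localisation of $\mathbb{K}[x,y]$.

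The last step is the surjectivity part of CRT, which follows from pairwise comaximality by the usual induction. One shows that each $J_i$ is comaximal with $\bigcap_{k\neq i} J_k$ by multiplying the relations $1 = a_k + b_k$, where $a_k \in J_i$ and $b_k \in J_k$. This produces idempotent-like elements $e_i \equiv 1 \pmod{J_i}$ and $e_i \equiv 0 \pmod{J_k}$ for $k \neq i$. Then $\sum_i r_i e_i$ maps to any prescribed tuple $(r_i + J_i)_i$, which completes the isomorphism.
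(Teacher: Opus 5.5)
Your argument is correct. It is the standard Chinese remainder theorem argument: pairwise comaximality of the primary components, lifted from their distinct maximal radicals. The paper gives no proof of its own and simply cites this theorem from Eisenbud.

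One small remark: the lifting step $\sqrt{P}+\sqrt{Q}=R' \Rightarrow P+Q=R'$ does not need $R'$ to be Noetherian. By the definition of the radical, each individual element $u\in\sqrt{P}$ already has some power in $P$.
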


Computing such a decomposition is a standard task in computational commutative algebra. Importantly, each component $R'/J_{(\alpha,\beta)}$ of $R'/J$ is associated with some point in $\mathcal{Z}$, and can be thought of as describing the local neighbourhood of $R'/J$ at this point. When $J_{(\alpha,\beta)} \neq (x-\alpha,y-\beta)$, we say that $(\alpha,\beta)$ is a fat point (Fig.~\ref{fig:primary-decomp}), corresponding to a root with non-trivial multiplicity.
We will from now onwards, when describing $R'/J_{(\alpha,\beta)}$ centre our coordinates at the relevant point and write its elements in terms of $u=x-\alpha$ and $v=y-\beta$ instead of $x$ and $y$. Note that each component is closed under translation by virtue of being an $R'$-module. Something that will become relevant later when returning to $\mathbb{F}_2$ is that the points of $\mathcal{Z}$ organise themselves in orbits of the form $\{(\alpha,\beta),(\alpha^2,\beta^2),(\alpha^4,\beta^4),\dots\}$. These are the finite field analogue of complex conjugate pairs from solving systems of real polynomials.
\begin{definition}[Galois orbits of roots]
    An orbit of $\mathcal{Z}$ is a length-$n$ sequence of solutions in $\mathcal{Z}$ taking the form $(\alpha_i,\beta_i)_{i \in \mathbb{Z}_n}$ where $\alpha_{i+1} = \alpha_i^2$ and $\beta_{i+1} = \beta_i^2$. $\mathcal{Z}$ can be partitioned into finitely many orbits.
\end{definition}

The smallest field in which each orbit $\mathcal{O}$ can be embedded is $\mathbb{F}_{2^{|\mathcal{O}|}}$. We can now make the move from $\mathbb{K}$-anyons to $\mathbb{K}$-symmetries by taking the dual (the space of $\mathbb{K}$-symmetries is then $(R'/J)^*$, where $^*$ denotes the $\mathbb{K}$-vector space rather than $\mathbb{F}_2$-vector space dual), and doing so preserves the decomposition by taking $\mathbb{K}$-anyon components to $\mathbb{K}$-symmetry components.

\begin{proposition}\label{prop-dual-decomp}
    $(R'/J)^* \cong \oplus_{(\alpha,\beta) \in \mathcal{Z}} \left(R'/J_{(\alpha,\beta)}\right)^*$
\end{proposition}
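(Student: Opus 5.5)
The plan is to deduce the statement from Proposition~\ref{crt-decomp} by using the fact that dualisation commutes with finite direct sums. The only subtlety is to keep track of the $R'$-module structure, since the dual carries the twisted action $(r\phi)(m)=\phi(\bar r m)$ from the definition of $M^*$.

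\emph{Step 1: transport the isomorphism.} Let
\[
\Phi: R'/J \to \bigoplus_{(\alpha,\beta)\in\mathcal{Z}} R'/J_{(\alpha,\beta)}
\]
be the isomorphism of Proposition~\ref{crt-decomp}. Concretely, $\Phi$ is $r+J \mapsto (r+J_{(\alpha,\beta)})_{(\alpha,\beta)}$. It is well defined because $J\subseteq J_{(\alpha,\beta)}$. It is an $R'$-module isomorphism, and in particular $\mathbb{K}$-linear. Its transpose
\[
\Phi^*: \Big(\bigoplus_{(\alpha,\beta)} R'/J_{(\alpha,\beta)}\Big)^* \to (R'/J)^*, \qquad \psi \mapsto \psi\circ\Phi,
\]
is then a $\mathbb{K}$-linear bijection with inverse $(\Phi^{-1})^*$.

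\emph{Step 2: dual of a direct sum.} Since $\mathcal{Z}$ is finite, the direct sum is also a direct product. So for $\mathbb{K}$-vector spaces there is the canonical isomorphism
\[
\Big(\bigoplus_i M_i\Big)^* \cong \bigoplus_i M_i^*, \qquad \psi \mapsto (\psi\circ\iota_i)_i,
\]
where $\iota_i$ are the inclusions. Its inverse sends $(\phi_i)_i$ to the functional $(m_i)_i\mapsto\sum_i\phi_i(m_i)$. Finiteness matters here: for infinitely many summands the dual of a sum is a product, not a sum. In our setting finiteness is guaranteed by the exactness assumption, which makes $\mathcal{Z}$ finite.

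\emph{Step 3: compatibility with translation.} We must check that both maps respect the dual $R'$-action, so that the decomposition is one of translation-closed subspaces (each summand $(R'/J_{(\alpha,\beta)})^*$ is an $R'$-submodule). For $\Phi^*$,
\[
(r\psi)\circ\Phi(m) = \psi(\bar r\,\Phi(m)) = \psi(\Phi(\bar r m)),
\]
which is $\big(r(\psi\circ\Phi)\big)(m)$ because $\Phi$ is $R'$-linear. For the sum map,
\[
(r\psi)\circ\iota_i = \psi(\bar r\,\iota_i(\cdot)) = \psi\circ\iota_i(\bar r\,\cdot),
\]
because $\iota_i$ is $R'$-linear. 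Here the bar is the involution on $R'$, extended $\mathbb{K}$-linearly with $\overline{x^iy^j}=x^{-i}y^{-j}$.

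Composing the two isomorphisms gives the claim. I expect no real obstacle. The only point needing care is Step 3, namely that the twisted action $\bar r$ still makes the isomorphisms module maps. This reduces immediately to $R'$-linearity of $\Phi$ and of the inclusions. One can also note that each summand $(R'/J_{(\alpha,\beta)})^*$ identifies with the set of functionals on $R'/J$ vanishing on the other components.
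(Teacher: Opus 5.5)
Your proposal is correct and follows essentially the same route as the paper: dualise the finite CRT decomposition, use that $\mathrm{Hom}_{\mathbb{K}}(-,\mathbb{K})$ commutes with finite direct sums via $(f_i)_i \mapsto \sum_i f_i$, and check compatibility with the twisted action $(r\phi)(m)=\phi(\bar r m)$. The only differences are that you make the transpose $\Phi^*$ of the CRT isomorphism explicit, which the paper leaves implicit, and you check $R'$-linearity on the map $\psi\mapsto(\psi\circ\iota_i)_i$ rather than on its inverse; these are equivalent.
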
 
\begin{proof}
    See Appendix~\ref{proof2}
\end{proof}

If we have computed a basis of each $\mathbb{K}$-anyon component, then its dual is a basis of the corresponding $\mathbb{K}$-symmetry component. We will now look to return to the space of $\mathbb{F}_2$-symmetries via a $\mathbb{F}_2$-linear map that takes an $\mathbb{F}_2$-vector space basis ($\mathbb{F}_2$-basis) of $\mathbb{K}$-symmetries onto a basis of $\mathbb{F}_2$-symmetries in a way that takes components to components. We will thus need to first turn our $\mathbb{K}$-vector space basis ($\mathbb{K}$-basis) into a $\mathbb{F}_2$-basis of the same space. This is achieved by taking the setwise product of a $\mathbb{F}_2$-basis $\{a_i\}$ with the $\mathbb{K}$-basis $\{b_i\}$ of our $\mathbb{K}$-symmetries to get $\{a_i b_j\}$.
Before we describe this map, let us first define the field trace:

\begin{definition}
    Given a finite field $\mathbb{K} \supseteq \mathbb{F}_2$ and some $\alpha \in \mathbb{K}$, the field trace $\mathrm{Tr}(\alpha)$ is a $\mathbb{F}_2$-linear map given by the sum over an orbit:
    \begin{align}
        \mathrm{Tr}(\alpha) = \alpha + \alpha^2 + \alpha^4 + \dots + \alpha^{2^{n-1}}
    \end{align}
    where $\mathbb{K}$ has size $2^n$. We overload notation to define the trace of a $\mathbb{K}$-symmetry as follows:
    \begin{align*}
        (\mathrm{Tr}(\phi))(r) = \mathrm{Tr}(\phi(r))
    \end{align*}
    which is to say that we take the trace pointwise.
\end{definition}

We can now describe the map returning us to $\mathbb{F}_2$-symmetries. Given a symmetry $\phi$, we take its field trace and then restrict it to $R/I$ via the natural embedding of $R/I$ within $R'/J$ (concretely, $r + I$ is embedded as $r + J$). We claim that this map takes $\mathbb{K}$-symmetry components belonging to the same orbit onto the same $\mathbb{F}_2$-symmetries and thus we only need to apply it to one component per orbit.

\begin{theorem}\label{prop-trace-decomp}
    Let the image of a $\mathbb{K}$-symmetry component $(R'/J_{(\alpha,\beta)})^*$ under the $\mathbb{F}_2$-linear map 
    $$s \mapsto \mathrm{Tr}(s) \circ \iota$$ 
    (where $\iota$ is the natural embedding of $R/I$ into $R'/J$) be denoted $S_{(\alpha,\beta)}$. 
    Then $S_{(\alpha,\beta)} = S_{(\alpha',\beta')}$ for $(\alpha,\beta)$ and $(\alpha',\beta')$ belonging to the same orbit and we can label these $S_O$ by orbits $O \in \mathcal{O}$.
    Here, $S_O$ is an $R$-module (it is closed under translation) and 
    \begin{align}
        (R/I)^* \cong \bigoplus_{O \in \mathcal{O}} S_O.
    \end{align}
\end{theorem}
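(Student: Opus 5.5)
The plan is to make everything explicit via the Frobenius twist on $R'/J$ and the idempotents of the Chinese remainder decomposition of Proposition~\ref{crt-decomp}. Let $\sigma : R' \to R'$ be the ring automorphism that squares coefficients in $\mathbb{K}$ and fixes $x,y$; it is $\mathbb{F}_2$-linear, and its fixed ring is exactly $R$. Since $J = IR'$ is generated by $A,B \in R$, we have $\sigma(J) = J$, so $\sigma$ descends to $R'/J$. The image $\sigma(J_{(\alpha,\beta)})$ is primary with radical $(x-\alpha^2, y-\beta^2)$. Uniqueness of the primary components in $J = \bigcap J_{(\alpha,\beta)}$ then forces $\sigma(J_{(\alpha,\beta)}) = J_{(\alpha^2,\beta^2)}$.

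\textbf{Equality along orbits.} For $s \in (R'/J_{(\alpha,\beta)})^*$ I define $s' := \mathrm{Fr}\circ s \circ \sigma^{-1}$, where $\mathrm{Fr}(c)=c^2$. This $s'$ is $\mathbb{K}$-linear, and by the previous paragraph it vanishes on $J_{(\alpha^2,\beta^2)}$, so $s' \in (R'/J_{(\alpha^2,\beta^2)})^*$; the assignment $s \mapsto s'$ is a bijection between these two component duals. For $r \in R$ we have $\sigma^{-1}\iota(r) = \iota(r)$, so
\[
\mathrm{Tr}(s'(\iota r)) = \mathrm{Tr}(s(\iota r)^2) = \mathrm{Tr}(s(\iota r)).
\]
Hence $S_{(\alpha,\beta)} = S_{(\alpha^2,\beta^2)}$, and iterating along the orbit gives the first claim.

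\textbf{$R$-module structure.} Take $r \in R$ and $m \in R/I$. Using that $\bar r \in R$ has $\mathbb{F}_2$ coefficients, we get $(r\cdot(\mathrm{Tr}\, s\circ\iota))(m) = \mathrm{Tr}(s(\bar r\,\iota m)) = \mathrm{Tr}((rs)(\iota m))$. The component $(R'/J_{(\alpha,\beta)})^*$ is closed under the $R'$-action, so $rs$ lies in it. Therefore $S_O$ is an $R$-submodule.

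\textbf{Direct sum, first half: the summands span.} Consider the full map $\Phi : (R'/J)^* \to (R/I)^*$, $s \mapsto \mathrm{Tr}\circ s\circ\iota$. Since $R'/J \cong \mathbb{K}\otimes_{\mathbb{F}_2} R/I$, $\mathbb{K}$-linear functionals on $R'/J$ are the same as $\mathbb{F}_2$-linear maps $f : R/I \to \mathbb{K}$, and $\Phi(f) = \mathrm{Tr}\circ f$. The trace is surjective onto $\mathbb{F}_2$, so choosing $c$ with $\mathrm{Tr}(c)=1$ and setting $f = c\,\phi$ gives $\Phi(f)=\phi$; thus $\Phi$ is surjective. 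By Proposition~\ref{prop-dual-decomp} every $s$ splits as $\sum_p s_p$ over components, so $(R/I)^* = \sum_O S_O$.

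\textbf{Direct sum, second half: the sum is direct.} Let $e_p \in R'/J$ be the orthogonal idempotents realising Proposition~\ref{crt-decomp}, and set $e_O = \sum_{p\in O} e_p$. The automorphism $\sigma$ permutes the $e_p$ along orbits, so $e_O$ is $\sigma$-fixed. By Galois descent on $\mathbb{K}\otimes R/I$ (the $\sigma$-fixed part is $R/I$), $e_O$ lies in $\iota(R/I)$. This yields orthogonal idempotents of $R/I$ summing to $1$, hence $R/I = \bigoplus_O e_O(R/I)$ and $(R/I)^* = \bigoplus_O (e_O R/I)^*$, viewing $(e_O R/I)^*$ as the functionals vanishing on $(1-e_O)R/I$. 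A functional $s$ supported on component $p \in O$ kills $e_q(R'/J)$ for every $q \neq p$, hence kills $(1-e_O)$-multiples, and so $S_O \subseteq (e_O R/I)^*$. Together with the spanning step this gives $S_O = (e_O R/I)^*$ and the direct sum.

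I expect the main obstacle to be the descent step showing $e_O \in R/I$, i.e.\ that the $\sigma$-invariants of $\mathbb{K}\otimes_{\mathbb{F}_2} R/I$ are exactly $R/I$. I would prove it by choosing an $\mathbb{F}_2$-basis $\{a_i\}$ of $\mathbb{K}$ and comparing coordinates: $\sigma$ acts on $\sum a_i\otimes v_i$ as $\sum a_i^2 \otimes v_i$, and $\mathbb{K}^{\mathrm{Fr}} = \mathbb{F}_2$ forces all of $v$'s weight onto $1\otimes R/I$. Care is needed since $R/I$ need not be finite-dimensional a priori, but the argument is coordinatewise, so that should not matter.
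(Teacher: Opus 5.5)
Your proof is correct, but it is organised differently from the paper's.

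\textbf{The paper's route.} The paper starts from an $\mathbb{F}_2$-rational CRT decomposition $R/I=\bigoplus_i R/Q_i$, with one primary ideal $Q_i$ per orbit and $Q_iR'=\bigcap_j Q_{ij}$; this compatibility is asserted, not proved. It uses the Frobenius permutation of the $Q_{ij}$ to get the contraction $Q_{ij}\cap R=Q_i$. It then shows that each single component $(R'/Q_{ij})^*$ surjects onto $(R/Q_i)^*$. To do so, it extends an $\mathbb{F}_2$-functional along the injection $R/Q_i\hookrightarrow R'/Q_{ij}$ and writes the extension as $\mathrm{Tr}\circ f$ with $f$ $\mathbb{K}$-linear, using non-degeneracy of the trace pairing. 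Orbit-independence, the $R$-module structure and directness then all follow from the identity $S_{(\alpha,\beta)}=(R/Q_i)^*$.

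\textbf{Your route.} You stay entirely on the $\mathbb{K}$ side.
\begin{itemize}
\item Orbit-independence comes from the explicit twist $s\mapsto \mathrm{Fr}\circ s\circ\sigma^{-1}$ together with $\mathrm{Tr}(c^2)=\mathrm{Tr}(c)$.
\item Spanning comes from $R'/J\cong\mathbb{K}\otimes_{\mathbb{F}_2}R/I$ and a constant $c$ with $\mathrm{Tr}(c)=1$. This is lighter than the paper's trace-representation lemma.
\item Directness comes from descending the orbit idempotents $e_O$ to $R/I$.
\end{itemize}

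\textbf{What each buys.} The descent step is the real gain of your approach. It constructs the $\mathbb{F}_2$-rational decomposition $R/I=\bigoplus_O e_O R/I$ and identifies $S_O=(e_O R/I)^*$, which is the paper's $(R/Q_i)^*$. The paper instead presupposes that the primary decomposition of $I$ over $\mathbb{F}_2$ is indexed by orbits and extends correctly to $R'$, so your argument is more self-contained. In exchange, the paper shows directly that each individual component already surjects onto its summand. You obtain this only after combining global surjectivity with the containment $S_O\subseteq (e_OR/I)^*$.

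\textbf{Descent step.} Your worry there is unnecessary. $R/I$ is finite-dimensional because $\mathcal{Z}$ is finite, and in any case the argument is cleanest in coordinates with respect to an $\mathbb{F}_2$-basis $\{w_k\}$ of $R/I$. An element $\sum_k c_k\otimes w_k$ is $\sigma$-fixed iff $c_k^2=c_k$, i.e.\ $c_k\in\mathbb{F}_2$, for every $k$.
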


\begin{proof}
    See Appendix~\ref{proof3}.
\end{proof}

This theorem concludes our procedure: compute a decomposition of $\mathbb{K}$-anyons, take duals to get a decomposition of $\mathbb{K}$-symmetries, and then apply the above map to get a decomposition of $\mathbb{F}_2$-symmetries. Practically, we will want to choose the field $\mathbb{K}$ component-by-component, to be the smallest field $\mathbb{F}_{2^{\vert \mathcal{O} \vert}}$ containing the orbit associated with each component. This ensures that when taking the trace-restriction of the $\mathbb{F}_2$-basis of $\mathbb{K}$-symmetries to get a spanning set of $\mathbb{F}_2$-symmetries, we will not get any redundant elements and it will be a basis.

Let us now illustrate this procedure with some examples.  

\begin{figure}
    \centering
    \includegraphics[width=1.0\linewidth]{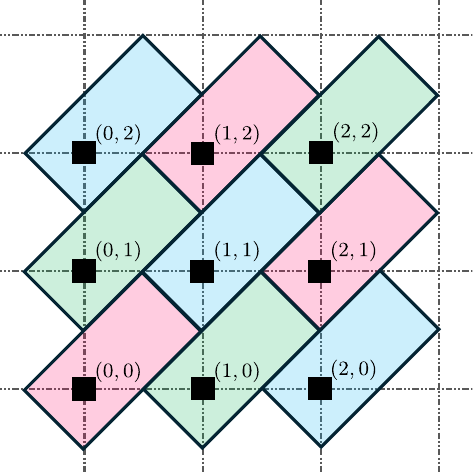}
    \caption{A patch of color code on a planar lattice, as described by the polynomials $A = 1 + x + xy$, $B = 1 + y + xy$. Note how each color of check lives along sets of diagonals $(i,j) \equiv c$ (mod 3) with $c=0$ for red, $c=1$ for green, and $c=2$ for blue.}
    \label{fig:color_code_bb}
\end{figure}

\subsection{Example:  2D hexagonal color code}

We will begin with the motivating color code example of Section~\ref{sec-motivating}, and will show how the observed translational structure is made apparent by this procedure.

By separating the two qubits at each lattice site by a $(1/2,1/2)$ shift, (Fig.~\ref{fig:color_code_bb}), we can implement the hexagonal color code on a square lattice with the polynomials $A = 1 + x + xy$, $B = 1 + y + xy$~\cite{eberhardt2024pruningqldpccodesbivariate}. 
    
We first solve the system of equations $A=B=0$. Taking $A+B = x + y =0$, we have $x=y$. Substituting this back into $A$, we get $x^2 + x + 1 = 0$. This gives solutions $(\omega,\omega)$ and $(\omega^2,\omega^2)$, where $\omega$ is a primitive element of $\mathbb{F}_4$ satisfying $\omega^2 + \omega + 1 = 0$. Note that these two points form a single orbit. Writing $J = (x+y,1+x+x^2) = (x+y,x+\omega) \cap (x+y,x+\omega^2) = (x+\omega,y+\omega) \cap (x+\omega^2,y+\omega^2)$, we see that none of the points are fat. As we only need to select a single orbit representative, we select $(\omega,\omega)$, with corresponding component $R'/(u,v) \cong \mathbb{F}_4$ (with $u=x+\omega$, $v=y+\omega$). Its sole basis element is 1, and thus a basis of the corresponding component of $\mathbb{K}$-symmetries is $\delta_1$ (being the map taking $1 \mapsto 1$ and all other monomials in $u$ and $v$ to 0). This metacheck evaluates on $x^i y^j = (u+\omega)^i (v+\omega)^j$ as $\omega^{i+j}$. 

An  $\mathbb{F}_2$-basis of $\mathbb{F}_4$ is $\{1,\omega\}$ and thus we can write our basis of $\mathbb{K}$-symmetries as $(a_1 + a_\omega \omega)\delta_1$ for $a_i\in \mathbb{F}_2$. We then take the field trace of this basis to obtain the following basis of the component $S_{(\omega,\omega)}$ of $\mathbb{F}_2$-symmetries: $\{x^i y^j \mapsto \mathrm{Tr}(\omega^{i+j}), x^i y^j \mapsto \mathrm{Tr}(\omega^{i+j+1})\}$, forming a 2D space with three non-trivial elements $\{x^iy^j \mapsto \omega^{i+j+c} \;\vert\; c \in \mathbb{Z}_3\}$. As there is only one orbit in $\mathcal{Z}$, this is the sole component of the $\mathbb{F}_2$-symmetry space. These three non-trivial elements correspond to our three colors of metacheck, and any unit translation (e.g. $x^i y^j \mapsto x^{i+1} y^j$) will cycle between these. As $$\mathrm{Tr}({\omega^n}) = \begin{cases} 
      0 & n \equiv 0 \;(\mathrm{mod} \; 3) \\
      1 & \mathrm{otherwise}
\end{cases}$$
we can see that each metacheck is supported on two `colors' of diagonal $i+j = c$ (Fig.~\ref{fig:color_code_bb}).

\subsection{Example:  gross code}
\label{example-gross-code}

The gross code~\cite{Bravyi_2024} is described by polynomials $A = 1 + x + y^3/x$ and $B = 1 + y + x^3/y$. The solutions of $A=B=0$ form two orbits in $\mathbb{F}_4$; the first is generated by $(\omega,\omega)$ and the second is generated by $(\omega,\omega^2)$ (we can think of this as a reflection of $(\omega,\omega)$ and indeed reflecting the above choice of color code polynomials in any axis changes its orbit to this). $J$ decomposes into two ideals; $J_{(\omega,\omega)}$ is $(u,v)$ with $u=x+\omega, v=y+\omega$. This is identical to that of the color code and produces the same metachecks. $J_{(\omega,\omega^2)}$ is $(v^3,u+\omega v)$ with $u=x+\omega$, $v=y+\omega^2$. We have a basis $\{1,v,v^2\}$ of the corresponding component $R'/J_{(\omega,\omega^2)}$ of $\mathbb{F}_4$-anyons which gives rise to the basis $\{\delta_1,\delta_v + \omega \delta_u, \delta_{v^2} + \omega^{2} \delta_{u^2}\}$ of the component $(R'/J_{(\omega,\omega^2)})^*$ of $\mathbb{F}_4$-metachecks (the relation $u+\omega v = 0$ couples $\delta_u$ and $\delta_v$, as well as $\delta_{u^2}$ and $\delta_{v^2}$ together). Changing from $u$-$v$ coordinates to $x$-$y$ coordinates, we have that $\delta_1$ becomes $x^i y^j \mapsto \omega^{i+2j}=\omega^{i-j}$. We have that $\delta_v + \omega \delta_u$ becomes $x^i y^j = (u+\omega)^i (v+\omega^2)^j \mapsto j \omega^{i + 2(j-1)} + \omega i \omega^{(i-1)+2j} = \omega^{i-j}(i + \omega j)$, where we take the values of $i$ and $j$ modulo 2 on the right-hand side. Similarly, $\delta_{v^2} + \omega^2 \delta_{u^2}$ becomes $\omega^{i-j}\left(\binom{i}{2} + ij \omega + \binom{j}{2} \omega^2\right)$. Ultimately, we get the following basis of $\mathbb{F}_2$-metachecks:
    \begin{align*}
        x^i y^j &\mapsto \mathrm{Tr}(c \omega^{i+j}) \\
        x^i y^j &\mapsto \mathrm{Tr}(c \omega^{i-j}) \\
        x^i y^j &\mapsto \mathrm{Tr}(c \omega^{i-j}(i + \omega j)) \\
        x^i y^j &\mapsto \mathrm{Tr}\left(c \omega^{i-j}\left(\binom{i}{2} + ij \omega + \binom{j}{2} \omega^2\right)\right)
    \end{align*}
with $c \in \{1,\omega\}$, giving an 8-dimensional space of metachecks on the infinite lattice. The first four symmetries are the color code symmetries (with $\omega^{i-j}$ being a reflection of $\omega^{i+j}$). The next two basis metachecks have a factor $i + \omega j$ which is periodic under horizontal/vertical translations of 2.  Combined with the $\omega^{i-j}$ factor we see they are periodic under translations $(2l,2m)$ with $l-m \equiv 0 \;(\mathrm{mod} \; 3)$. The next two basis metachecks are periodic under translations $(4l,4m)$ with $l-m \equiv 0 \;(\mathrm{mod} \; 3)$. We can see that the smallest rectangular lattice hosting all metachecks is $12 \times 12$, and the standard $6 \times 12$ gross code lattice does not support the last two basis metachecks.

\subsection{Translational periods of symmetries}

As illustrated in the gross code example, upon computing a basis of symmetries via the decomposition, we are then able to determine translations which act trivially on each basis element and thus determine a basis of translations which act trivially on the symmetry space, called a coarse-graining of the code family~\cite{Haah_2013,wang2026decoupling2dtranslationinvarianttopological}. Coarse-grainings are used to determine the scales at which 2D codes decouple in toric codes. To determine the periods of a symmetry, we determine the periods of the plane wave factor (e.g. $\omega^{i-j}$) and the local structure factor (e.g. $\binom{i}{2} + ij \omega + \binom{j}{2} \omega^2$) and intersect the corresponding lattices. 

Let us begin with the periods of the plane wave factor, $\alpha^i\beta^j$; if the orbit of $(\alpha,\beta)$ has length $n$ and $\omega^\ell = \alpha$, $\omega^m = \beta$ for integers $\ell$ and $m$ where $\omega$ is a primitive of $\mathbb{K}$, then the periods of the factor are given by the lattice $\{ (i,j) \,\vert\, \ell i + mj \equiv 0 \; \text{mod} \,(2^n-1), \;\mathrm{for}\; i,j \in \mathbb{Z}\}$ (i.e., translations by these lattice vectors leave the factor unchanged). 

Now, let us consider the local structure factor. It has the following form:
\begin{align}
    \sum_k \alpha^{-a_k} \beta^{-b_k} \binom{i}{a_k}\binom{j}{b_k} .
\end{align}
By finding horizontal and vertical periods of the $\binom{i}{a_k}$ and $\binom{j}{b_k}$ terms respectively, we can then take the lowest common multiple of all horizontal and all vertical periods to get a rectangular period of the local structure factor.
The periodicity of $\binom{i}{a_k}$ in $i$ is $2^{\lfloor \log a_k \rfloor + 1}$ (this is a consequence of Lucas' theorem~\cite{lucas_theorem} that $\binom{i}{a_k} = 1$ exactly when $i \land a_k = a_k$ where $i$ and $a_k$ are written as bit-strings), and analogously for $\binom{j}{b_k}$ in $j$. We thus have that the smallest rectangular period of the local structure factor is $\{(2^{\lfloor \log \max_k a_k \rfloor + 1},0),(0,2^{\lfloor \log \max_k b_k \rfloor + 1})\}$, which is a sub-lattice of the lattice of all periods. 

Let us call the lattice of rectangular periods of the local structure factor $L'$, and the lattice of all periods of the structure factor $L$. $L'$ is a sub-lattice of $L$, but the two are not necessarily identical as $L$ may contain diagonal periods. Then, for any $u \in L$ but not in $L'$, there exists $r$ such that $2^r u \in L'$. Let $p_x =(2^{\lfloor \log \max_k a_k \rfloor + 1},0)$ and $p_y = (0,2^{\lfloor \log \max_k b_k \rfloor + 1})$. Then we can check whether $(p_x + p_y) / 2$ is a period. If it is, then we update the basis as $(p_x,(p_x+p_y)/2)$, now covering all $u$ such that $2u \in L'$. We can repeat this; at any stage if our basis is $(u,v)$, we check whether $(u+v)/2$ or $v/2$ are periods and update the basis accordingly (we always keep $p_x$). On the $n$-th iteration we will have covered all $u$ such that $2^n u \in L'$, and upon termination we will have a basis of $L$. Intersecting $L$ with the lattice of periods of the plane wave factor, we then have the lattice of periods of the symmetry space. In Table~\ref{table-periods}, we illustrate the coarse-grainings of various codes. 
\begin{table*}[t]
\small
\setlength{\tabcolsep}{4pt}
\makebox[\textwidth][c]{
\begin{tabular}{|cccccc|}
\hline
 \textbf{Code family} & $\bm{A, B}$ & $\bm{(\alpha,\beta)}$ & \textbf{Symmetries} & \textbf{Generating periods} & \textbf{C.G. periods} \\
\hline
\multirow{2}{*}{Toric code} & $1+x$ & \multirow{2}{*}{$(1,1) \in \mathbb{F}_2$} & \multirow{2}{*}{1} & \multirow{2}{*}{$(1,0),(0,1)$} & \multirow{1}{*}{$(1,0)$} \\
& $1+y$ & & & & \multirow{1}{*}{$(0,1)$} \\
\hline

\multirow{2}{*}{BB5 code~\cite{Ye_2025}} & $1+x$ & \multirow{2}{*}{$(1,\omega) \in \mathbb{F}_4$} & \multirow{2}{*}{$\mathrm{Tr}(c \omega^j)$} & \multirow{2}{*}{$(1,0),(0,3)$} & \multirow{1}{*}{$(1,0)$} \\
& $1+y + x^2 y^2$ & & & & \multirow{1}{*}{$(0,3)$}\\
\hline

 \multirow{2}{*}{Color code} & $1+x+xy$ & \multirow{2}{*}{$(\omega,\omega) \in \mathbb{F}_4$} & \multirow{2}{*}{$\mathrm{Tr}(c \omega^{i+j})$} & \multirow{2}{*}{$(1,-1),(1,2)$} & \multirow{1}{*}{$(3,0)$} \\
    & $1+y + xy$ & & & &  \multirow{1}{*}{$(0,3)$}\\
\hline

 \multirow{4}{*}{Gross code~\cite{Bravyi_2024}} & \multirow{2}{*}{$1+x+y^3/x$} & \multirow{1}{*}{$(\omega,\omega) \in \mathbb{F}_4$} & \multirow{1}{*}{$\mathrm{Tr}(c \omega^{i+j})$} & \multirow{1}{*}{$(1,-1),(1,2)$} & \multirow{2}{*}{$(12,0)$} \\
    & & \multirow{3}{*}{$(\omega,\omega^2) \in \mathbb{F}_4$} & $\mathrm{Tr}(c \omega^{i-j})$ & $(1,1),(1,-2)$ & \\
    & \multirow{2}{*}{$1+y+x^3/y$} & & $\mathrm{Tr}(c \omega^{i-j}(i + \omega j))$ & $(2,2),(2,-4)$ & \multirow{2}{*}{$(0,12)$}\\
    & & & $\mathrm{Tr}(c \omega^{i-j}\left(\tbinom{i}{2} + ij \omega + \tbinom{j}{2} \omega^2\right)$ & $(4,4),(4,-8)$ & \\
\hline

 \multirow{4}{*}{Two-gross~\cite{Bravyi_2024}} & \multirow{2}{*}{$1+x+y^3/x$} & \multirow{1}{*}{$(\omega,\omega) \in \mathbb{F}_4$} & \multirow{1}{*}{$\mathrm{Tr}(c \omega^{i+j})$} & \multirow{1}{*}{$(1,-1),(1,2)$} & \multirow{2}{*}{$(12,6)$} \\
    & & \multirow{2}{*}{$(\omega,\omega^2) \in \mathbb{F}_4$} & $\mathrm{Tr}(c \omega^{i-j})$ & $(1,1),(1,-2)$ & \\
    & \multirow{2}{*}{$x^3 + y^2 + y^7$} & & $\mathrm{Tr}(c \omega^{i-j}(i + \omega j))$ & $(2,2),(2,-4)$ & \multirow{2}{*}{$(30,-48)$}\\
    & & ${(\omega,\omega^{19})} \in \mathbb{F}_{64} := \frac{\mathbb{F}_2[\omega]}{( p_0(\omega) )}$ & $\mathrm{Tr}(c \omega^{i + 19j})$ & $(6,3),(7,-7)$ & \\
\hline

 \multirow{4}{*}{BB code~\cite{Bravyi_2024}} & \multirow{2}{*}{$1+ x^2 + x^7$} & ${(\omega,\omega)} \in \mathbb{F}_4$ & $\mathrm{Tr}(\omega^{i+j})$ & $(1,-1),(1,2)$ & \multirow{2}{*}{$(93,0)$}\\
 & & $(\omega,\omega^2) \in \mathbb{F}_4$ & $\mathrm{Tr}(c\omega^{i-j})$ & $(1,1),(1,-2)$ & \\
 &\multirow{2}{*}{$1 + y + x^9/y$}  & $(\omega,\omega^{2}) \in \mathbb{F}_{32}:= \frac{\mathbb{F}_2[\omega]}{( p_3(\omega))}$  & $\mathrm{Tr}(c\omega^{i+2j})$ & $(2,-1),(7,12)$ & \multirow{2}{*}{$(0,93)$}\\
 & & $(\omega,\omega^{7}) \in \mathbb{F}_{32}:= \frac{\mathbb{F}_2[\omega]}{( p_3(\omega))}$ & $\mathrm{Tr}(c\omega^{i+7j})$ & $(3,4),(4,-5)$ & \\
 \hline

 \multirow{2}{*}{Generalised T.C~\cite{Liang_2025}} & $1 + x + y/x^2$ & $(\omega,\omega^6) \in \mathbb{F}_{16} := \frac{\mathbb{F}_2[\omega]}{( \omega^4 + \omega + 1)}$ & $\mathrm{Tr}(c\omega^{i+6j})$ & $(3,2),(3,-3)$ & \multirow{1}{*}{$(9,1)$}\\
 & $1 + y + xy^2$ & $(\omega,\omega^5) \in \mathbb{F}_{8} := \frac{\mathbb{F}_2[\omega]}{(\omega^3 + \omega + 1)}$ & $\mathrm{Tr}(c\omega^{i+5j})$ & $(2,1),(1,-3)$ & \multirow{1}{*}{$(3,12)$}\\
 \hline

 \multirow{2}{*}{Generalised T.C~\cite{Liang_2025}} & $1 + x + 1/(xy^3)$ & $(\omega,\omega^{66}) \in \mathbb{F}_{128} := \frac{\mathbb{F}_2[\omega]}{( p_0(\omega))}$ & $\mathrm{Tr}(c\omega^{i+6j})$ & $(5,-2),(11,21)$ & \multirow{1}{*}{$(127,0)$}\\
 & $1 + y + 1/y^6$ & $(\omega,\omega^{52}) \in \mathbb{F}_{128}:= \frac{\mathbb{F}_2[\omega]}{(\omega^7 + \omega^4 + 1)}$ & $\mathrm{Tr}(c\omega^{i+5j})$ & $(6,-5),(11,12)$ & \multirow{1}{*}{$(0,127)$}\\
 \hline

 \multirow{4}{*}{Generalised T.C~\cite{Liang_2025}} & \multirow{2}{*}{$1+x+y/x^3$}
    & \multirow{2}{*}{$(\omega,\omega^2) \in \mathbb{F}_4$} & $\mathrm{Tr}(c \omega^{i-j})$ & $(1,1),(1,-2)$ & \multirow{2}{*}{$(102,-102)$}\\
    & & & $\mathrm{Tr}(c \omega^{i-j}(i + \omega j))$ & $(2,2),(2,-4)$ & \\
    & \multirow{2}{*}{$1+y+x/y^3$} & $(\omega,\omega) \in \mathbb{F}_8 := \frac{\mathbb{F}_2[\omega]}{( \omega^3 + \omega^2 + 1)}$ & $\mathrm{Tr}(c\omega^{i+j})$ & $(1,-1),(3,4)$ & \multirow{2}{*}{$(54,156)$}\\
    & & $(\omega,\omega^{16}) \in \mathbb{F}_{256}:= \frac{\mathbb{F}_2[\omega]}{(p_2(\omega))}$ & $\mathrm{Tr}(c\omega^{i+16j})$ & $(1,-16),(16,-1)$ &\\
  \hline
\end{tabular}
}
\caption{
    Infinite-lattice symmetries of various 2D code families. We include the defining polynomials $A$ and $B$, one component $(\alpha,\beta)$ from each orbit in the symmetry space decomposition, the basis of symmetries (i.e., the image of $x^iy^j$ under the symmetry map), the generating lattice periods of each symmetry basis element, and the generating periods of the entire symmetry space for coarse-graining (C.G.). We construct larger fields as the univariate polynomial ring $\mathbb{F}_2[x]$ modulus some polynomial, wherein $\omega$ is the image of $x$.  We define the following: $p_0(x) = (\sum_{i=0}^7 \omega^i) - \omega^4$, $p_1(x) = x^6 + x^4 + x^3 + x + 1$, $p_2(x) = x^8 + x^7 + x^5 + x^3 + 1$, and $p_3(x) = x^5 + x^4 + x^2 + x + 1$.
}
\label{table-periods}
\end{table*}

\section{Symmetries on periodic boundaries}\label{section-finite}

As seen in the gross code example of Section~\ref{example-gross-code}, our representation of the infinite-lattice metachecks provides insight into which metachecks are supported on some periodic lattice configuration, as well as the smallest lattice required to support all metachecks. In this section, we approach this systematically with the goal of identifying bases of metachecks on rectangular lattices. In particular, having precomputed an infinite lattice basis of metachecks, we would like to determine the rectangular-lattice basis of metachecks from a boundary configuration with minimal computation. As we will see, there are two parts of this problem: the first is determining which components of the decomposition `survive' on the rectangular lattice (which is to say some subspace is supported), and the second is determining the surviving subspace of each such component. The idea is that we can think of the boundary ideal $\mathfrak{b} = (x^\ell - 1, y^m - 1)$ as having its own space of `$\mathfrak{b}$-metachecks' $(R/\mathfrak{b})^*$. Our goal is then to compute the intersection of infinite-lattice and $\mathfrak{b}$-metachecks, which will be the space of surviving metachecks.

First, let us explore the structure of $\mathfrak{b}$; given its simple form we expect the decomposition to be correspondingly simple. Let $a$ and $b$ be the largest powers of 2 dividing $\ell$ and $m$. Then given a point $(\alpha,\beta)$ satisfying $\alpha^\ell-1=\beta^m-1=0$, the corresponding component of $\mathfrak{b}$ is $(u^a,v^b)$. This is because we can write $x^\ell - 1 = (x^{\ell/a}-1)^a$; so the multiplicity of roots in $x^\ell - 1$ is given by $a$ (analogously for $b$). So a basis of $\mathbb{K}$-symmetries of $\mathfrak{b}$ is given by $\{\delta_{u^i v^j} \;\vert\; 0 \leq i < a,\, 0 \leq j < b\}$. Let us call this space $\mathcal{B}$.

This suggests the following procedure; for each point $(\alpha,\beta) \in \mathcal{Z}$ satisfying $A=B=0$, we test that the point satisfies the boundary equations $\alpha^\ell - 1 = \beta^m - 1 = 0$. 
If the point satisfies the equations, then we intersect the corresponding component of $\mathbb{K}$-symmetries with the box $\mathcal{B}$. After repeating for every component/point, we can then convert back to $\mathbb{F}_2$-symmetries to get a decomposition of the symmetries on rectangular boundaries. 

\begin{example}[Gross code symmetries]
    Consider the gross code on $6 \times 12$ rectangular boundaries. $\omega^6-1=\omega^{12}-1=0$, so the $(\omega,\omega)$ component survives. Its basis is $\delta_1$, which is unchanged by intersection with $\mathcal{B} = \{\delta_{u^i} \delta_{v^j} \;\vert\; 0 \leq i < 2,\, 0 \leq j < 4\}$. In general, unless a surviving point is fat, its associated space of $\mathbb{K}$-symmetries will be $\{\delta_1\}$, fit in every box, and thus be unchanged on rectangular boundaries. The $(\omega,\omega^2)$ component, however, has a basis element $\delta_{v^2} + \omega^2 \delta_{u^2}$ which pokes outside the box (Fig.~\ref{fig:gross_code_box}). So the surviving subspace is spanned by the other two basis elements $\delta_1$ and $\delta_{v} + \omega \delta_u$. Thus, of our infinite-lattice basis of symmetries, all except for $x^i y^j \mapsto \mathrm{Tr}(c \omega^{i-j}\left(\binom{i}{2} + ij \omega + \binom{j}{2} \omega^2\right))$ survive on the $6 \times 12$ lattice.
\end{example}

\begin{figure*}[htbp]
    \centering
    \includegraphics[width=0.48\textwidth]{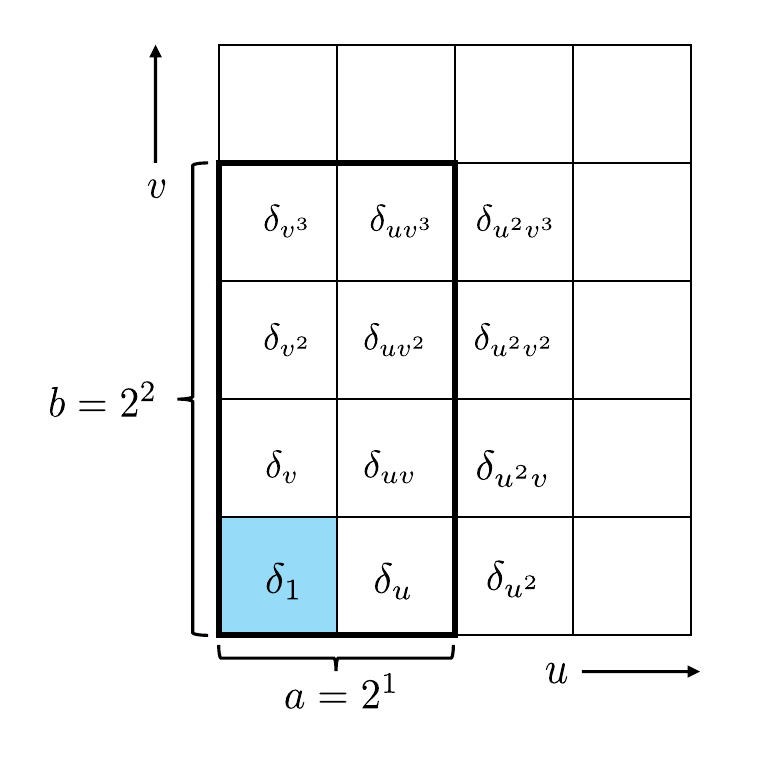}
    \hfill
    \includegraphics[width=0.48\textwidth]{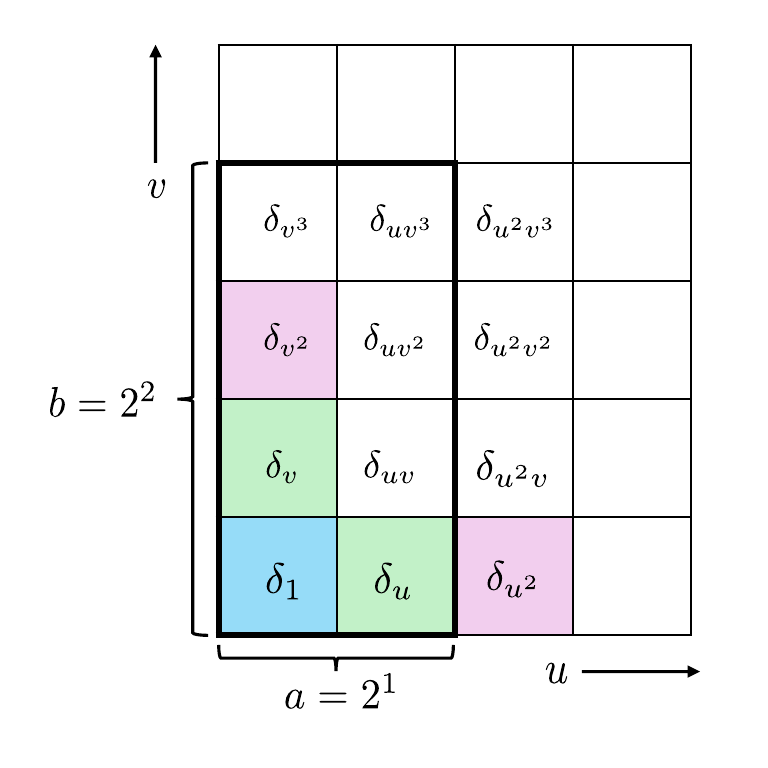}
    \caption{Computing symmetries of the gross code on $6 \times 12$ boundaries. (Left) The $(\omega,\omega)$-component of the gross code $\mathbb{K}$-symmetries is spanned by $\{\delta_1\}$ (blue). This fits in any non-zero box. (Right) The $(\omega,\omega^2)$-component of the gross code $\mathbb{K}$-symmetries is spanned by $\{\delta_1,\delta_v + \omega \delta_u, \delta_{v^2} + \omega^2 \delta_{u^2}\}$ (blue, green, and pink respectively). The pink basis element does not fit inside the box.}
    \label{fig:gross_code_box}
\end{figure*}

In the above example, the basis after intersecting with $\mathcal{B}$ is the subset of the infinite-lattice basis that lies in $\mathcal{B}$. As in Fig.~\ref{fig:adapting_basis}, this is not generally true for an arbitrary basis. We can however adapt the basis of infinite-lattice $\mathbb{K}$-symmetries within each component to make this true. To see this, consider the following ordering of basis elements of $R^*$ $$\delta_1 < \delta_u < \delta_{u^2} \dots < \delta_{v} < \delta_{uv} < \dots$$
As it resembles the $u < v$ monomial order $1 < u < u^2 \dots < v < uv < \dots$, we also call it the $u < v$ ordering. We can similarly conceive a $v < u$ ordering. Our goal is to ensure that the leading term of each basis element of our infinite-lattice $\mathbb{K}$-symmetries under both orderings is distinct; that this is possible is a standard result about bases bi-adapted to two flags (\cite{glaudo2025simultaneousgeneratingsetsflags}, Theorem 2.1) (the flags in our instance are the sequences of nested subspaces induced by either ordering). First, we write our basis of $\mathbb{K}$-symmetries as a matrix with the columns ordered $u < v$. We then convert to row echelon form by Gaussian elimination, so that the leading terms of each basis element under $u < v$ are distinct. We reorder the columns under the $v < u$ ordering and again perform Gaussian elimination to put it in row echelon form. When doing this, however, if we have two rows sharing a leading term, we always pivot off the row whose leading term in the $u < v$ ordering is last. Pivoting in this way will ensure that the leading terms of each row under the $u < v$ ordering remain distinct. 

\begin{figure*}[htbp]
    \centering
    \includegraphics[width=0.48\textwidth]{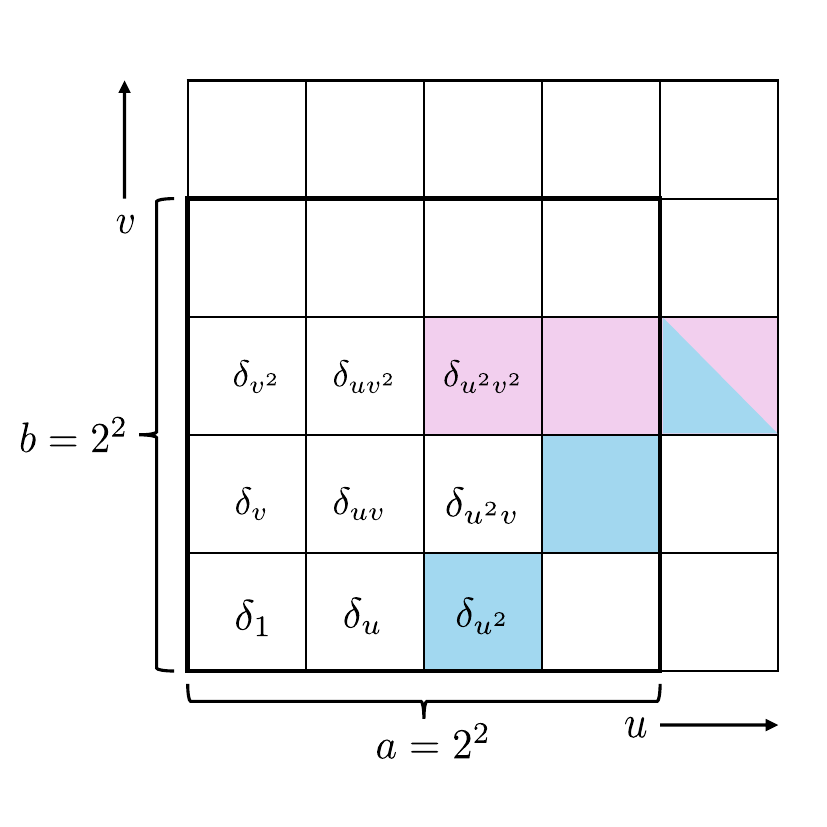}
    \hfill
    \includegraphics[width=0.48\textwidth]{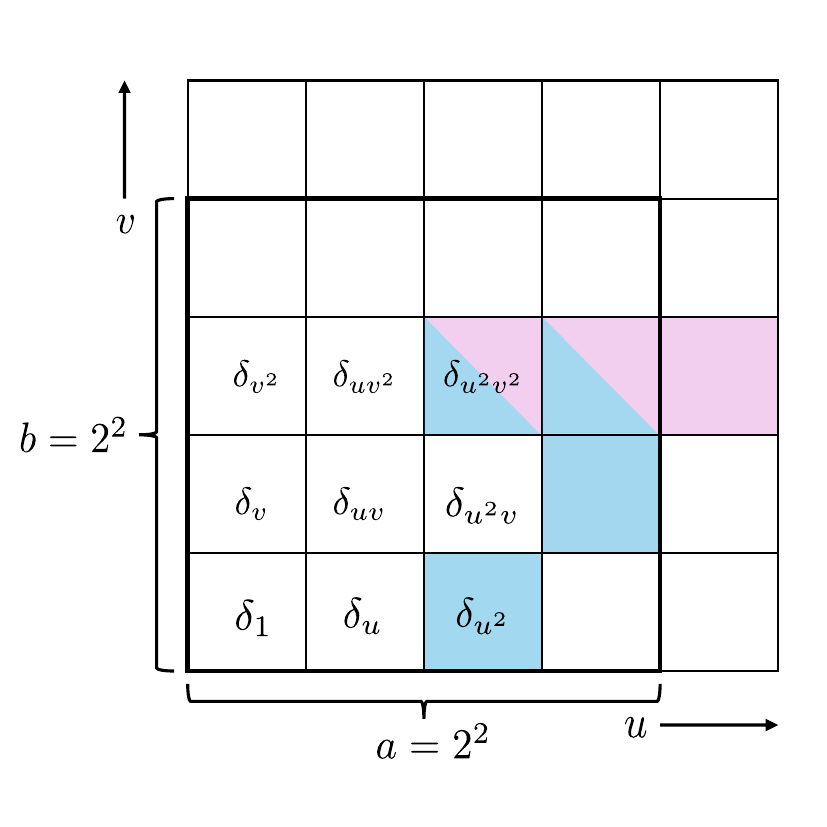}
    \caption{Adapting a basis for rectangular boundaries. (Left) An example of a basis $\{u^2v^2 + u^3 v^2 + u^4 v^2, u^2 + u^3 v + u^4 v^2\}$ where neither basis element (highlighted in pink and blue respectively) survives intersection with $\mathcal{B}$, but their sum does. (Right) An `adapted' basis of the same subspace where upon intersection with $\mathcal{B}$, the blue basis element survives and the pink does not. The blue basis element spans the subspace after intersection.}
    \label{fig:adapting_basis}
\end{figure*}

\begin{table*}[t]
\small
\setlength{\tabcolsep}{4pt}
\makebox[\textwidth][c]{
\begin{tabular}{|cccll|}
\hline
 \textbf{Code} & $\bm{A, B}$ & \textbf{Lattice periods} & $\bm{(\alpha,\beta)}$ & $\bm{((R'/J)^*)_{\alpha,\beta}}$ \\
\hline

\multirow{2}{*}{Toric code $[[2n^2,2,n]]$} & $1+x$ & $(n, 0)$ & \multirow{2}{*}{$\hl{(1,1)} \in \mathbb{F}_2$ (1)} & \multirow{2}{*}{$\{\hl{\delta_1}\}$}\\
& $1+y$ & $(0, n)$ & & \\
\hline

\multirow{2}{*}{BB5 [[30,4,5]]~\cite{Ye_2025} }  & $1+x$ 
    & (5, 0)  & \multirow{2}{*}{${\hl{(1,\omega)}} \in \mathbb{F}_4$ (2)} & \multirow{2}{*}{$\{\hl{\delta_1}\}$}\\
    & $1+y + x^2y^2$ & (0, 3) & & \\
\hline

\multirow{2}{*}{Color code $[[18n^2,4,4n]]$~\cite{eberhardt2024pruningqldpccodesbivariate}} & $1+x+xy$
    & $(3n,0)$ & \multirow{2}{*}{$\hl{(\omega,\omega)} \in \mathbb{F}_4$ (2)} & \multirow{2}{*}{$\{\hl{\delta_1}\}$}\\
    & $1+y+xy$ & $(0,3n)$ & & \\
\hline

\multirow{2}{*}{Gross code [[144,12,12]]~\cite{Bravyi_2024}} & $1+x+y^3/x$
    & (6, 0) & $\hl{(\omega,\omega)} \in \mathbb{F}_4$ (2) & $\{\hl{\delta_1}\}$ \\
    & $1 + y + x^3/y$ & (0, 12) & $\hl{(\omega,\omega^2)} \in \mathbb{F}_4$ (4)& $\{\hl{\delta_1}, \hl{\delta_v + \omega \delta_u}, \delta_{v^2} + \omega^2 \delta_{u^2}\}$\\
\hline

\multirow{2}{*}{$12 \times 12$ Gross [[288,16,12]]~\cite{cruzbenito2026evolutionarydiscoverybivariatebicycle}} & $1+x+y^3/x$
    & (12, 0) & $\hl{(\omega,\omega)} \in \mathbb{F}_4$ (2) & $\{\hl{\delta_1}\}$ \\
    & $1 + y + x^3/y$ & (0, 12) & $\hl{(\omega,\omega^2)} \in \mathbb{F}_4$ (6)& $\{\hl{\delta_1}, \hl{\delta_v + \omega \delta_u}, \hl{\delta_{v^2} + \omega^2 \delta_{u^2}}\}$\\
\hline

\multirow{2}{*}{$6 \times 9$ Gross [[108,8,10]]~\cite{Bravyi_2024}} & $1+x+y^3/x$
    & (6, 0) & $\hl{(\omega,\omega)} \in \mathbb{F}_4$ (2) & $\{\hl{\delta_1}\}$ \\
    & $1 + y + x^3/y$ & (0, 9) & $\hl{(\omega,\omega^2)} \in \mathbb{F}_4$ (2)& $\{\hl{\delta_1}, {\delta_v + \omega \delta_u}, {\delta_{v^2} + \omega^2 \delta_{u^2}}\}$\\
\hline

\multirow{3}{*}{Two-gross [[288,12,18]]~\cite{Bravyi_2024}} & \multirow{1.5}{*}{$1 + x + y^3/x$}
    & \multirow{1.5}{*}{(12, 0)} & $\hl{(\omega,\omega)} \in \mathbb{F}_4$ (2) & $\{\hl{\delta_1}\}$ \\
    & \multirow{2.5}{*}{$x^3 + y^2 + y^7$}& \multirow{2.5}{*}{(0, 12)} & $\hl{(\omega,\omega^2)} \in \mathbb{F}_4$ (4)& $\{\hl{\delta_1}, \hl{\delta_v + \omega \delta_u}\}$\\
    & & & ${(\omega,\omega^{19})} \in \mathbb{F}_{64} := \frac{\mathbb{F}_2[\omega]}{( p_0(\omega) )}$ & $\{\delta_1\}$\\
\hline

\multirow{4}{*}{BB code [[90,8,10]]~\cite{Bravyi_2024}} & \multirow{2}{*}
{$1+x^2 + x^7$}
    & \multirow{2}{*}{(15, 0)} & $\hl{(\omega,\omega)} \in \mathbb{F}_4$ (2) & $\{\hl{\delta_1}\}$ \\
    & & & $\hl{(\omega,\omega^2)} \in \mathbb{F}_4$ (2)& $\{\hl{\delta_1}\}$\\
    & \multirow{2}{*}{$1 + y + x^9/y$}& \multirow{2}{*}{(0, 3)} & $(\omega,\omega^{2}) \in \mathbb{F}_{32}  := \frac{\mathbb{F}_2[\omega]}{( p_3(\omega))}$ & $\{{\delta_1}\}$\\
    & & & $(\omega,\omega^{7}) \in \mathbb{F}_{32}  := \frac{\mathbb{F}_2[\omega]}{( p_3(\omega))}$ & $\{{\delta_1}\}$\\
\hline

\multirow{2}{*}{Generalised T.C [[120,8,12]]~\cite{Liang_2025}} & $1+x+y/x^2$
    & (0, 10) & $\hl{(\omega,\omega^6)} \in \mathbb{F}_{16} := \frac{\mathbb{F}_2[\omega]}{( \omega^4 + \omega + 1)}$ (4)& $\{\hl{\delta_1}\}$\\
    & $1+y+xy^2$ & (6, 4) & $(\omega,\omega^5) \in \mathbb{F}_8 := \frac{\mathbb{F}_2[\omega]}{(\omega^3 + \omega + 1)}$ & $\{\delta_1\}$\\
\hline

\multirow{2}{*}{Generalised T.C [[254,14,16]]~\cite{Liang_2025}} & $1+x+1/(xy^3)$
    & (0, 127) & $\hl{(\omega,\omega^{66})} \in \mathbb{F}_{128} := \frac{\mathbb{F}_2[\omega]}{( p_0(\omega))}$ (7)& $\{\hl{\delta_1}\}$\\
    & $1+y+1/y^6$ & (1, 25) & $(\omega,\omega^{52}) \in \mathbb{F}_{128} := \frac{\mathbb{F}_2[\omega]}{(\omega^7 + \omega^4 + 1)}$ & $\{\delta_1\}$\\
\hline

\multirow{3}{*}{Generalised T.C [[294,10,20]]~\cite{Liang_2025}} & \multirow{1.5}{*}{$1+x+y/x^3$}
    & \multirow{1.5}{*}{(0, 21)}  & $\hl{(\omega,\omega^2)} \in \mathbb{F}_4$ (2)& $\{\hl{\delta_1}, {\delta_v + \omega \delta_u}\}$\\
    & \multirow{2.5}{*}{$1+y+x/y^3$} & \multirow{2.5}{*}{(7, 7)} & $\hl{(\omega,\omega)} \in \mathbb{F}_{8} := \frac{\mathbb{F}_2[\omega]}{( \omega^3 + \omega^2 + 1)}$ (3) & $\{\hl{\delta_1}\}$\\
    & & & ${(\omega,\omega^{16})} \in \mathbb{F}_{256} := \frac{\mathbb{F}_2[\omega]}{(p_2(\omega))}$ & $\{{\delta_1}\}$\\
\hline
\end{tabular}
}
\caption{
    A table of the infinite-lattice symmetries of various 2D codes with the ones surviving compactification to the finite lattice highlighted in red. The $\mathbb{K}$-symmetries are given instead of the $\mathbb{F}_2$-symmetries for compactness. In all cases, the surviving elements form a basis of finite-lattice symmetries. In parentheses we put the dimension of the component associated with each orbit after compactification. We also define $p_0(x) = (\sum_{i=0}^7 x^i) - x^4$, $p_1(x) = x^6 + x^4 + x^3 + x + 1$, $p_2(x) = x^8 + x^7 + x^5 + x^3 + 1$, and $p_3(x) = x^5 + x^4 + x^2 + x + 1$.
}
\label{table-examples}
\end{table*}

What we need to precompute then is a bi-adapted basis of $\mathbb{K}$-symmetries on the infinite lattice (to be precise, we do not need the whole basis as we ignore redundant components in the same orbit) as well as the corresponding basis of infinite-lattice $\mathbb{F}_2$-symmetries. We can then take any rectangular boundary configuration $(\ell,m)$ and first, for each point $(\alpha,\beta)$ test it against the boundary equations. For points satisfying the equations, we then select the sub-basis that lies in $\mathcal{B}$. The corresponding $\mathbb{F}_2$-symmetries then form a basis on the rectangular lattice. In Table.~\ref{table-examples}, we explicitly describe the symmetries of various examples from the literature. Although the procedure outlined above is specific to rectangular boundaries, we have in Table.~\ref{table-examples} codes on non-rectangular or twisted boundaries; in many cases, the components are either all trivial (in the sense that there are no fat points) or relatively simple, and thus we can easily determine how the symmetry space varies with different choices of twisted boundary. We can also determine the logical dimension or number of logical qubits of these codes, which is double the dimension of the symmetry space. Suppose for each $(\alpha_i,\beta_i) \in \mathbb{F}_{2^{q_i}}$ (with one point per orbit), $n_i$ $\mathbb{K}$-symmetry basis elements survive. Then the logical dimension is $\sum_i q_i n_i$. We can thus explicitly describe how the logical dimension varies with rectangular lattice size. 

\begin{example}[Counting logical qubits]
    Consider the family of codes with gross polynomials on lattice dimensions $2^a \ell \times 2^b m$, with $\ell,m$ odd, the logical dimension is 0 unless both $\ell$ and $m$ are multiples of 3. If $\mathrm{min}(a,b)=0$, then the logical dimension is 8, with each orbit contributing 4. If $\mathrm{min}(a,b)=1$, then the logical dimension is 12, with the $(\omega,\omega^2)$ orbit contributing 8, and if $\mathrm{min}(a,b)=2$, then the logical dimension is 16. More generally, for twisted boundaries with corners $(0,0),\,(2^a \ell),\, (2^b m, 2^c n),\, (2^a \ell + 2^b m, 2^c n)$, we have four logical qubits from the $(\omega,\omega)$ point if $3 \,\vert\, \ell, m+n$. We additionally have $4 \times (\min\,\{a,b,c,2\}+1)$ logical qubits from the $(\omega,\omega^2)$ point if $3 \,\vert\, \ell, m-n$.
\end{example}

\section{Conclusion}

In this work, we have described a procedure to compute the infinite-lattice materialised symmetries/global metachecks of BB codes by decomposing the symmetry space into components closed under translation. Furthermore, we gain explicit expressions for the symmetries resembling plane waves that reveal their translational structure. This builds on prior methods based on \grobner bases \cite{Liang_2025} that return $\mathbb{F}_2$-vector space bases that describe the additive but not translational properties of symmetries. These bases can be used to determine coarse-grainings to decouple bivariate bicycle codes into toric codes. We show how to easily compute a minimal rectangular coarse-graining, i.e.,\ a smallest rectangular boundary configuration that preserves all infinite-lattice symmetries. We also compute a basis of all coarse-grainings, including twisted coarse-grainings, although this is more computationally involved.

We show how to use the infinite-lattice symmetries to easily compute the symmetries of the associated family of codes on different rectangular periodic lattices: first, appropriately adapt the bases of each component, then by determining whether each basis element `survives', determine a sub-basis of the infinite-lattice symmetries that describes the finite-lattice symmetries. A byproduct of this is that we can compute the number of logical qubits on finite lattices, and can thus given a family of 2D TI codes describe how its logical dimension varies as a function of the lattice dimensions.

We note that our method for finding infinite-lattice symmetries also generalises to trivariate-tricycle code families~\cite{jacob2026singleshotdecodingfaulttolerantgates,Menon_2026} with growing distances. Here, we have a duality between the module $R/(A,B,C)$ of (Z) point excitations and (X) global symmetries, and decomposing the module of pointlike excitations gives us an analogous decomposition of the symmetry space. For example, the $[[72,6,6]]$ code in~\cite{jacob2026singleshotdecodingfaulttolerantgates} has symmetries given by $x^i y^j z^k \mapsto \mathrm{Tr} (\omega^{j+c})$ for $c \in \mathbb{Z}_3$ and $\omega \in \mathbb{F}_4$ (we thus see that as its symmetry space is 2-dimensional, it must have at most $2 \times 3 = 6$ logical qubits). In 3-dimensions, we cannot however adapt the basis to cheaply determine finite-lattice bases of symmetries; this is because we can at most adapt a basis to 2 flags (in the 2D case, these came from our orderings $u>v$ and $v>u$). In many cases, such as the above example, it should be fairly straightforward to identify how the basis changes on periodic boundaries.

Our method does not apply in full generality to twisted boundaries.  Specifically, the twisted boundary ideals do not appear to admit an elegant decomposition. If the infinite lattice components do not include fat points or the twisted boundaries have odd width and height, then restricting to the boundaries is straightforward and amounts to checking which points $(\alpha,\beta)$ satisfy the boundary condition equations.
Even when this is not the case, one can often determine how the symmetry space changes across different twisted boundaries by examining the explicit expressions for the infinite symmetry basis, as demonstrated with the gross code and its variants, given these expressions are sufficiently uncomplicated.

We have considered only 2D codes following a 2-block CSS form.  However, we expect that our work should generalise to non-CSS codes. It is known that for a 2D code on a square lattice with growing distance, stabiliser generators can be chosen so that the code has exclusively global metachecks (\cite{Haah_2013} Lemma 7.1). In such codes, we expect anyon-symmetry duality to hold, and thus we can decompose the anyon module into local pieces to get a symmetry decomposition. The anyon module in this general case is obtained by quotienting the module of excitations by the excitation/syndrome map that takes Pauli operators to their respective syndromes. 

Our results will likely find direct application in developing and improving matching-based decoders~\cite{sahay2026matchingdecoderbivariatebicycle,tan2026generalizedmatchingdecoders2d} for 2D codes, and also understanding what makes certain BB codes amenable to such decoders. Matching-based decoders have provable thresholds and therefore avoid having error floors on sufficiently large lattices~\cite{tan2026generalizedmatchingdecoders2d}, in contrast to heuristic decoders where such guarantees may not exist. In practice, performance of matching decoders can suffer due to the decrease in effective distance incurred by decoupling the code into symmetries. Heuristically, by including as many low periodicity symmetries as possible in the selected basis, the performance could be improved~\cite{tan2026generalizedmatchingdecoders2d}.  As each component in the symmetry-space decomposition corresponds to an irreducible periodicity (in the sense that we cannot write a symmetry with this periodicity as a sum of lower periodicity symmetries), the resulting basis satisfies these demands.
Should we want to design BB codes as matching-friendly as possible, we might want the components of the symmetry space to be as low in periodicity as possible.
It may thus be interesting to select ideals $I$ with low-periodicity decompositions, and test matching-based decoders on BB codes where $(A,B) = I$.

Another future direction would be to port this framework to understand open-boundary BB codes \cite{steffan2025tilecodeshighefficiencyquantum,liang2025planarquantumlowdensityparitycheck,wang2026algebrabivariatebicyclesurfacecodes} and design a larger variety of boundary configurations beyond rough-smooth boundaries. If we include all boundary gauge operators, the symmetries are consistent with those on the infinite lattice. Excluding boundary gauge operators to gap the boundary then `breaks' certain symmetries, so the surviving symmetries classify the boundary. 

It would also be interesting to explore the global metachecks/symmetries of fracton codes such as Haah's cubic code \cite{Haah_2011}, where the ideal decomposition is significantly more complex and we encounter 1D curves in addition to points (see Fig.~\ref{fig:primary-decomp}); the component associated with each 1D curve is an infinite-dimensional vector space. Understanding metachecks in such codes could pave the way to matching-based decoders of fracton codes, as well as improving our understanding of how the code properties vary with different boundary configurations.

\section*{Acknowledgements}

We thank Benjamin Brown, Kaavya Sahay, and Dominic Williamson for discussions.  This work is supported by the Defence Science and Technologies Group (DSTG) and Advanced Strategic Capabilities Accelerator (ASCA) through its Emerging and Disruptive Technologies (EDT) Program, and the Intelligence Advanced Research Projects Activity (IARPA) through the Entangled Logical Qubits program Cooperative Agreement Number W911NF-23-2-0223.

\appendix

\section{Proofs}

\subsection{Proof of Proposition.~\ref{prop-am-duality}}
\label{proof1}

\begin{proof}
    We will show that $\ker \tilde{d^0} \cong (R/(A,B))^*$; the second statement follows analogously. The isomorphism $\varphi: \ker \tilde{d^0} \to (R/(A,B))^*$ is defined as follows: $\varphi(f)([p]) = f(p)$ where $[p]$ is the class of $p$ in $R/(A,B)$. First, we show $\varphi$ is well-defined; that $f(p)=f(q)$ if $p-q \in (A,B)$. Let $p-q = rA + sB$. We have that $f(p-q) = f(rA+sB) = \tilde{A}f(r) = \tilde{B}f(s) = 0$, and thus $f(p)=f(q)$.
    It is easily checked that $\varphi (f + g) = \varphi (f) + \varphi (g)$ and $\varphi (m f) = m \varphi (f)$ for a monomial $m$. It follows that $\varphi$ is a homomorphism. Let us similarly define $\varphi^{-1}: (R/(A,B))^* \to \ker \tilde{d^0}$, with $\varphi^{-1} (u)(p) = u([p])$. This is readily checked to also be a homomorphism, and it can also be verified that it is the inverse of $\varphi$.
\end{proof}

\subsection{Proof of Proposition.~\ref{prop-dual-decomp}}
\label{proof2}
\begin{proof}
    We want to show that $\oplus_{(\alpha,\beta) \in Z} \left(R'/J_{(\alpha,\beta)}\right)^* \cong \left(\oplus_{(\alpha,\beta) \in Z} R'/J_{(\alpha,\beta)}\right)^*$.
    As $\mathrm{Hom}_{K}( -, K)$ sends direct sums to direct sums in our finitely-many summand case, we have that the two are isomorphic as $K$-vector spaces; the isomorphism takes a tuple of functions $(f_1,\dots,f_n)$ to the function $f$ with $f(a_1,\dots,a_n) = \sum_i f_i(a_i)$.
    It is clear that this respects $R'$-multiplication; if $(f_1, \dots f_n)$ is mapped to $f$, then $r \cdot (f_1, \dots f_n) = (rf_1, \dots, rf_n)$ gets mapped to $g$ with $g(a_1, \dots,a_n) = \sum_i (rf_i)(a_i) = \sum_i f_i(r^\dagger a_i) = f(r^\dagger (a_1, \dots a_n)) = rf(a_1,\dots,a_n)$.
\end{proof}

\subsection{Proof of Theorem.~\ref{prop-trace-decomp}}\label{proof3}

First, we show some intermediate but elementary results.
We will use a different notation from the theorem statement. Applying a CRT decomposition to $(R/I)$ gives us $R/I = \oplus_i R/Q_i$; we can then write the CRT decomposition of $R'/J$ as $\oplus_{ij} R/Q_{ij}$, where $Q_{i1},Q_{i2},\dots$ form an orbit (these are what we would write as $J_{(\alpha,\beta)},J_{(\alpha^2,\beta^2)},\dots$) such that $Q_{i} R' = \cap_i Q_{ij}$. Our $S_{(\alpha,\beta)}$'s will turn out to be the components $R/Q_i$.

\begin{lemma}\label{contraction}
    $Q_i = Q_{ij} \cap R$
\end{lemma}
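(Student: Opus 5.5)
We prove the two inclusions $Q_i \subseteq Q_{ij}\cap R$ and $Q_{ij}\cap R \subseteq Q_i$. The first uses only the containments already built into the set-up. The second relies on Galois symmetry together with primary decomposition over $\mathbb{F}_2$. Throughout, $Q_i$ is the $\mathbb{F}_2$-primary component of $I$ whose radical is the maximal ideal $\mathfrak{m}_i = \sqrt{Q_i}\subseteq R$, namely the ideal of the orbit. The extensions satisfy $Q_iR' = \bigcap_j Q_{ij}$, where the $Q_{ij}$ run over the points of that orbit.

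\textbf{Easy inclusion.} Since $Q_i \subseteq Q_iR' = \bigcap_j Q_{ij} \subseteq Q_{ij}$ and $Q_i\subseteq R$, we get $Q_i \subseteq Q_{ij}\cap R$ immediately.

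\textbf{Galois step.} Let $\sigma$ be the Frobenius automorphism of $\mathbb{K}$. We extend it coefficientwise to a ring automorphism of $R'$. It fixes $R$ pointwise and therefore fixes the ideal $Q_iR'$. It permutes the primary components of $Q_iR'$, sending the component at $(\alpha,\beta)$ to the component at $(\alpha^2,\beta^2)$. By uniqueness of isolated primary components (all associated primes here are maximal), $\sigma(Q_{ij}) = Q_{i,j+1}$. Now take $r\in Q_{ij}\cap R$. Since $\sigma(r)=r$, applying powers of $\sigma$ gives $r\in Q_{ij'}$ for every $j'$ in the orbit. Hence
\[
r\in \bigcap_{j'} Q_{ij'}\cap R = Q_iR'\cap R .
\]

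\textbf{Faithful flatness.} It remains to show $Q_iR'\cap R = Q_i$. This holds because $R' = R\otimes_{\mathbb{F}_2}\mathbb{K}$ is a free $R$-module with a basis $1=e_1,\dots,e_n$ coming from an $\mathbb{F}_2$-basis of $\mathbb{K}$. Any element of $Q_iR'$ can be written uniquely as $\sum_k q_k e_k$ with $q_k\in Q_i$, since $Q_iR' = \bigoplus_k Q_i e_k$. If such an element lies in $R$, it equals $r\,e_1$ with $r \in R$, so by uniqueness of coordinates $r = q_1 \in Q_i$. This finishes the proof.

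The main obstacle is justifying the Galois step, specifically that $\sigma$ permutes the $Q_{ij}$ and that all components in one orbit arise from the single $Q_i$. I would handle this with the uniqueness of the primary decomposition of the zero-dimensional ideal $J$ in Proposition~\ref{crt-decomp}. First, $\sigma(J)=J$, and $\sigma(J_{(\alpha,\beta)})$ is primary with radical $\sigma((x-\alpha,y-\beta))=(x-\alpha^2,y-\beta^2)$, so uniqueness gives $\sigma(J_{(\alpha,\beta)}) = J_{(\alpha^2,\beta^2)}$. Second, the $\mathbb{F}_2$-primary decomposition $I=\bigcap_i Q_i$ extends to $J=\bigcap_i Q_iR'$, since extension commutes with finite intersections by flatness. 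Each $Q_iR'$ has radical equal to the ideal of the orbit of $\mathfrak{m}_i$, and grouping components by orbit identifies $Q_iR'$ with $\bigcap_j Q_{ij}$, as asserted in the set-up.
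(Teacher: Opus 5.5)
Your proof is correct and follows the paper's argument: the Frobenius map fixes $Q_iR'$ and cyclically permutes its primary components, so $Q_{ij}\cap R$ is independent of $j$ and equals $\bigcap_{j'}Q_{ij'}\cap R = Q_iR'\cap R$. You also justify the final step $Q_iR'\cap R = Q_i$ using the freeness of $R'=R\otimes_{\mathbb{F}_2}\mathbb{K}$ over $R$, which the paper only asserts.
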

\begin{proof}
    Suppose we order the components so $\sqrt{Q_{ij}} = (x-\alpha^{2^i}, y-\beta^{2^i})$. Consider the Frobenius automorphism $\sigma$ that squares the coefficients of polynomials. Then, as $Q_i R' = \cap_i Q_{ij}$, and noting $\sigma {Q_i R'} = Q_i R'$, we see that $\cap_i \sigma(Q_{ij})$ gives an alternate decomposition of $Q_i R'$ with the primes corresponding to each $Q_{ij}$ being cycled. However, as the local decomposition of $R'/Q_i R'$ is unique, $\sigma$ must permute components $Q_{ij}$. It follows that $Q_{ij} \cap R = Q_{ij'} \cap R$ as $R$ consists fixed points under $\sigma$. Thus $Q_{ij} \cap R = \cap_{j'} Q_{ij'} \cap R = Q_i R' \cap R = Q_i$.
\end{proof}

\begin{lemma} \label{field-trace-nondegen}
    Given a $K$-vector space $V$, and a $\mathbb{F}_2$-linear map $g: V \to \mathbb{F}_2$, there exists a $K$-linear map $f: V \to K$ such that $g(v) = \mathrm{Tr}(f(v))$.
\end{lemma}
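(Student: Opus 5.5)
The plan rests on the standard fact that the trace form is nondegenerate. Here $K$ is a finite extension of $\mathbb{F}_2$ of degree $n$, and $\mathrm{Tr}: K \to \mathbb{F}_2$ is the map defined earlier. I first check that $\mathrm{Tr}$ is not identically zero. It is a polynomial of degree $2^{n-1}$ in its argument, so it has at most $2^{n-1} < 2^n$ roots in $K$. It is also $\mathbb{F}_2$-linear, since Frobenius is additive. Consequently the pairing $(a,b) \mapsto \mathrm{Tr}(ab)$ on $K$ is a nondegenerate $\mathbb{F}_2$-bilinear form: given $a \neq 0$, choose $c$ with $\mathrm{Tr}(c) \neq 0$ and take $b = a^{-1}c$.

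Next I handle the case $V = K$. Define the $\mathbb{F}_2$-linear map $\Psi: K \to \mathrm{Hom}_{\mathbb{F}_2}(K,\mathbb{F}_2)$ by $\Psi(c) = (v \mapsto \mathrm{Tr}(cv))$. Nondegeneracy makes $\Psi$ injective. Both sides have dimension $n$ over $\mathbb{F}_2$, so $\Psi$ is bijective. Every $\mathbb{F}_2$-functional on $K$ is therefore of the form $v \mapsto \mathrm{Tr}(cv)$, which is $\mathrm{Tr}\circ f$ for the $K$-linear map $f(v) = cv$.

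For general $V$, I compare the $K$-dual with the $\mathbb{F}_2$-dual through the map
$$\Phi: \mathrm{Hom}_K(V,K) \to \mathrm{Hom}_{\mathbb{F}_2}(V,\mathbb{F}_2), \qquad f \mapsto \mathrm{Tr}\circ f,$$
which is $\mathbb{F}_2$-linear. To see that $\Phi$ is injective, suppose $f \neq 0$ and pick $v$ with $f(v) = a \neq 0$. Choose $c$ with $\mathrm{Tr}(c) \neq 0$. By $K$-linearity, $f(a^{-1}c\,v) = c$, so $\mathrm{Tr}(f(a^{-1}c\,v)) \neq 0$ and $\mathrm{Tr}\circ f \neq 0$.

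Surjectivity splits into two cases.

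\textbf{Finite-dimensional $V$.} If $\dim_K V = d$, both sides have $\mathbb{F}_2$-dimension $dn$, so the injective map $\Phi$ is also surjective.

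\textbf{General $V$.} This is the main obstacle, since the paper's $V = R'/J$ is finite-dimensional but the lemma as stated is general. I would avoid a dimension count. Choose a $K$-basis $\{e_\lambda\}$ of $V$, so that $V \cong \bigoplus_\lambda K e_\lambda$. A $\mathbb{F}_2$-functional $g$ on $V$ is determined by its restrictions $g_\lambda$ to the lines $Ke_\lambda$, and these restrictions can be prescribed arbitrarily. By the $V = K$ case, each restriction has the form $g_\lambda(ae_\lambda) = \mathrm{Tr}(c_\lambda a)$. Define $f$ on the basis by $f(e_\lambda) = c_\lambda$ and extend $K$-linearly. Then $\mathrm{Tr}(f(\sum_\lambda a_\lambda e_\lambda)) = \sum_\lambda \mathrm{Tr}(c_\lambda a_\lambda) = g(\sum_\lambda a_\lambda e_\lambda)$, where every sum is finite. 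This argument covers every case uniformly and gives the required $f$ with $g = \mathrm{Tr}\circ f$.
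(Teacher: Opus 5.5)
Your proof is correct, but it passes from $K$ to a general $V$ by a different reduction than the paper. Both arguments rest on the same one-dimensional fact: every $\mathbb{F}_2$-functional on $K$ has the form $x \mapsto \mathrm{Tr}(ax)$ for a unique $a$. The paper simply cites this and the nondegeneracy of the trace pairing. You prove both, via a root count for the non-vanishing of $\mathrm{Tr}$ and a dimension count for $\Psi$, so your write-up is more self-contained.

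The difference lies in how the one-dimensional case is extended. The paper works basis-free: for every $v$ it defines $f(v)$ as the unique $a$ with $g(cv)=\mathrm{Tr}(ac)$ for all $c\in K$. It then uses nondegeneracy to check that this $f$ is additive and $K$-homogeneous. You instead apply the one-dimensional case only on the lines $Ke_\lambda$ of a chosen basis and extend $K$-linearly. Linearity of $f$ is then automatic, and the only check is $\mathrm{Tr}\circ f=g$, which is just additivity of $\mathrm{Tr}$ and $g$ on finite sums.

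The paper's route buys canonicity. It needs no basis, hence no appeal to choice when $\dim_K V$ is infinite. The defining relation $g(cv)=\mathrm{Tr}(c f(v))$ also makes it transparent that $f$ is forced by $g$, i.e.\ that $f\mapsto \mathrm{Tr}\circ f$ is a natural isomorphism $\mathrm{Hom}_K(V,K)\cong\mathrm{Hom}_{\mathbb{F}_2}(V,\mathbb{F}_2)$. In your version, uniqueness requires the separate injectivity argument for $\Phi$.

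Your finite-dimensional case is subsumed by the basis argument. That paragraph, and the injectivity of $\Phi$ if you only need existence as the lemma states, can be dropped.
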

\begin{proof}
    We use the fact that all $\mathbb{F}_2$-linear maps from $K$ to $\mathbb{F}_2$ can be written as $x \mapsto \mathrm{Tr}(ax)$. Given an $\mathbb{F}_2$-linear map $g: V \to \mathbb{F}_2$ for a $K$-vector space $V$, we can consider the map $c \mapsto g (cv)$ for $c \in K$ and some $v \in V$, which we must be able to write as $\mathrm{Tr}(ac)$ for some $a$. Let us denote this $a$ as $f(v)$, thus defining $f: V \to K$. We can show $f$ is $K$-linear as follows: taking $d \in K$, we have that for all $c \in K$, $\mathrm{Tr}(cf(dv)) = g(cdv) = \mathrm{Tr}(cd f(v))$. As the trace pairing $(a,b) \mapsto \mathrm{Tr}(ab)$ is non-degenerate, we have that $f(dv) = df(v)$. Similarly, $\mathrm{Tr}(cf(v+w)) = g(cv + cw) = \mathrm{Tr}(c(f(v)+f(w)))$, and thus $f(v+w)=f(v)+f(w)$ by non-degeneracy. So we see that $f$ is $K$-linear. As $g(v) = \mathrm{Tr}(f(v))$, we are done.
\end{proof}
Now, we can show the main result.

\begin{proof}
    Consider the map $s \mapsto \mathrm{Tr}(s) \circ \iota$. It is readily seen that this takes a component such as $R'/Q_{i1}$ into $R/Q_i$, but what we need to show is that it surjects. Consider the map $x / Q_i \mapsto \iota(x) /Q_{i1}$ from $R/Q_i$ to $R'/Q_{i1}$. It follows from Lemma.~\ref{contraction} that this map is injective and thus takes a basis of $R/Q_i$ to a linearly independent set in $R'/Q_{i1}$. We can thus extend some map $s: R/Q_i \to \mathbb{F}_2$ to a map $s': R/Q_{i1} \to \mathbb{F}_2$ such that $s = s' \circ \iota$. Applying Lemma.~\ref{field-trace-nondegen}, we have that for any $s \in (R/Q_i)^*$, we have $s' \in (R/Q_{i1})^*$ such that $s = \mathrm{Tr}(s') \circ \iota$.
\end{proof}

\end{document}